\def\@settitle{\begin{center}%
		\baselineskip14\p@\relax
		\normalfont\LARGE\scshape\bfseries
		\@title
	\end{center}%
}
\def\subsection{\@startsection{subsection}{2}%
	\z@{.5\linespacing\@plus.7\linespacing}{.5\linespacing}%
	{\normalfont\large\bfseries}}
\def\subsubsection{\@startsection{subsubsection}{3}%
	\z@{.5\linespacing\@plus.7\linespacing}{.5\linespacing}%
	{\normalfont\itshape}}
\date{\today}
\authors}
\newtheorem{Thm}{Theorem}[section]
\newtheorem{Lem}[Thm]{Lemma}
\newtheorem{As}[Thm]{Assumption}
\newtheorem{Prob}[Thm]{Problem}
\newtheorem{Def}[Thm]{Definition} 
\newtheorem{Rem}[Thm]{Remark}
\theoremstyle{remark}
\newtheorem{Ex}{Example}
\newcommand{\R}{\mathbb{R}}
\newcommand{\N}{\mathbb{N}}
\def\be{\begin{equation}}
\def\ee{\end{equation}}
\title[Uncertain MASs with Optimization Missions and Event-Triggered Communications]{Uncertain Multi-Agent Systems with Distributed Constrained Optimization Missions and Event-Triggered Communications: Application to Resource Allocation}
\author[M. S. Sarafraz]{Mohammad Saeed Sarafraz}
\author[M. S. Tavazoei]{Mohammad Saleh Tavazoei}
\thanks{The authors are with the Electrical Engineering Department, Sharif University of Technology, Iran ({\tt \{Sarafraz,Tavazoei\}@ee.sharif.edu}).}
\begin{document}
\maketitle

\begin{abstract}
    This paper deals with solving distributed optimization problems with equality constraints by a class of uncertain nonlinear heterogeneous dynamic multi-agent systems. It is assumed that each agent with an uncertain dynamic model has limited information about the main problem and limited access to the information of \textcolor{black}{the state variables of the other agents}. A distributed algorithm that guarantees cooperatively solving \textcolor{black}{of} the constrained optimization problem by the agents is proposed. Via applying this algorithm, the agents do not need to continuously broadcast their data. It is shown that the proposed algorithm can be useful in solving resource allocation problems.
\end{abstract}
	\section{Introduction}\label{sec.introduction}
	{N}{ew} generations of \textcolor{black}{the} networked systems are becoming more considered in modern engineering applications. In these applications, various research subjects such as distributed optimization, distributed control, event-triggered implementation, and real-time control are raised alongside the issue of networked systems. One of the main issues posed in the field of networked systems is the control of multi-agent systems over communication networks. In some of these systems, the agents are in competitive environments~\cite{tajeddini2018mean}, whereas in some other ones they cooperatively try to reach an intended objective~\cite{minciardi2011optimal}. Furthermore, in many cases the control objective in control of a multi-agent system can be expressed in terms of solving an optimization problem. For example, in multi-agent systems the issues of consensus (static or dynamic), rendezvous, formation control, deployment, and resource allocation can be formulated as \textcolor{black}{some} optimization problems. In this framework, through a static/dynamic model each agent updates its decision\textcolor{black}{/state} variables which participate in the global optimization problem. Due to distributed structure of the objective function and also unknown local parameters/functions, agents need to exchange some information between themselves. \textcolor[rgb]{0,0,0}{The communication constraints, such as cost and blackout constraints, force us to use non-continuous data transmission methods, i.e. periodic discrete transmission algorithms and event-triggered methods, instead of continuous transmission of data between the agents. }
	
	\color{black}
	The problem investigated in this paper lies at the interface of these three main topics: (i) distributed convex constrained optimization, (ii) robust control of uncertain multi-agent systems, and (iii) event-triggered communications. More specifically, the aim is to propose a distributed algorithm with continuous-time computations alongside event-triggered communications among the neighbors to control an uncertain, possibly nonlinear, dynamic multi-agent system such that this system can solve a convex constrained optimization problem.
	\color{black}
	\subsection{\label{sec:Literature}Literature Review}
	The event-triggered control \textcolor{black}{strategy}, as an approach for implementing aperiodic control algorithms and also as an alternative for the typical periodic sampled-data control \textcolor{black}{methods} \cite{heemels2012introduction,wang2011event}, has been discussed in some early classic works \cite{bekey1966sensitivity}. Recent advances on this topic can be found in \textcolor[rgb]{0,0,0}{various} recently published \textcolor{black}{papers (For example, \cite{duan2018asynchronous} and \cite{liu2019event1}, which deal with the event-triggered control in discrete-time and continuous-time plants, respectively)}. Also, in recent years, several studies have been done to apply event-triggered control methods in distributed and networked systems \cite{peng2020distributed}. In this regard, there are \textcolor[rgb]{0,0,0}{various research} works concerning stability \textcolor{black}{analysis} \cite{zhang2019event} and consensus \cite{wang2020hybrid} in event-triggered control based distributed systems. Also, there are some works on proposing event-triggered optimization algorithms for multi-agent systems \cite{kia2015distributed}. Within the framework of event-triggered control, the major difference between the problems of stabilization and optimization is that in the stabilization problem the equilibrium point is known, whereas in the optimization problem the goal is to find an \textcolor{black}{unknown} equilibrium point specifying the solution of the \textcolor[rgb]{0,0,0}{considered} optimization problem. In fact, distributed optimization has been introduced as a more practical alternative for centralized optimization \cite{margellos2018distributed,nedic2009distributed}. Although the consensus problem in multi-agent systems can be expressed in terms of a distributed optimization problem, there are some works addressing more general forms of distributed optimization problems in multi-agent systems \cite{savkin2019method}. \textcolor{black}{Benefiting from the} saddle-point/primal-dual dynamics is \textcolor{black}{one of} the most common \textcolor{black}{approaches for proposing distributed algorithms in order to solve} distributed optimization problems. As an example, \cite{feijer2010stability} has \textcolor{black}{focused on solving} distributed convex problems subject to some inequality constraints by using primal-dual gradient dynamics. Also, \cite{richert2016distributed} has introduced discontinuous saddle-point algorithms for \textcolor{black}{solving} distributed optimization problems.
	
	\subsection{\label{sec:Contribution}Statement of Contributions}
	\textcolor{black}{It was discussed that the control objectives in a multi-agent system can be described as optimization programs. Nevertheless, there are a few research works with the subject of distributed optimization problems which have assumed dynamic models for the agents solving those problems. These works have mainly considered the special case of the consensus problem. 
		For example, \cite{yu2017consensus} has studied consensus in multi-agent systems with the agents characterized by double-integrator dynamics. In addition, \cite{liu2018distributed,su2016distributed} have proposed consensus algorithms in the case of dealing with linear agents. In this paper we consider a general form for the agent dynamics and a general form of the optimization problem. Unlike the references cited above, in this study the dynamic model of the agents can be nonlinear and uncertain. As another relevant study, the recent research work \cite{zhang2019data} has considered the cooperative consensus problem in heterogeneous nonlinear multi-agent systems. The key difference between the present paper and the above-cited references is that the considered optimization program for the dynamic multi-agent system is more general, and the obtained results are not limited to the consensus problem. Also, design of distributed optimization algorithms for dynamic multi-agent systems has been addressed in a few research works. For example, \cite{wang2016distributed} has proposed an algorithm for solving a convex optimization problem in a multi-agent system with heterogeneous nonlinear agents. The communication between the agents has been assumed to be continuous in this study. Furthermore, \cite{jokic2009constrained} has suggested an approach for solving a constrained convex optimization problem with a dynamic single-agent system. 
		To generalize the previous works, the current research intends to add to the body of knowledge on distributed event-triggered algorithms by using a dynamic multi-agent system for solving an optimization problem with equality constraints. The agents in the dynamic multi-agent system are assumed to be described by uncertain nonlinear dynamic models in this study. In addition, each agent has its own cost function with the goal of determining the component corresponding to it in the optimal solution. Furthermore, an event-triggered strategy for exchanging between the agents is developed to reduce the communication costs.} 
	
	The main contributions of this paper can be listed as follows:
	\begin{itemize}
		\item{\bf Proposing a decentralized Zeno-free event-triggered algorithm to solve a distributed constrained convex optimization:}  We propose a decentralized mechanism in which the components of a constrained optimization problem can jointly solve this  problem without the need to access continuously to others' data. For this purpose, we provide a fully distributed event-triggered mechanism. Also, it is proved that due to a novel method, this optimization system under the event-trigger mechanism does not have Zeno-behavior (Theorem \ref{thm.decentralized}).

		\item{\bf Solving optimization problems by heterogeneous uncertain nonlinear multi-agent systems:} We develop a method for a distributed constrained optimization problem to be solved by a heterogeneous uncertain nonlinear multi-agent system. The involved agents use an event-based method for data exchanges (Theorem ~\ref{thm.main}). To the best of our knowledge, this is the first time that an uncertain nonlinear dynamic multi-agent system has been considered to solve this type of optimization problem.
	\end{itemize}
	\color{black}
	\subsection{\label{sec:notation}Notations \textcolor{black}{and Definitions}}
	
	In this paper, $\R$, $\R^n$, $\R^{m\times n}$, and $\N$ respectively denote the sets of real numbers, vectors with $n$ real elements, $m \times n$  real-valued matrices, and positive integer numbers. \textcolor{black}{The Euclidean inner product of vectors $x,y \in \R^n$ is denoted by $\left\langle {x,y} \right\rangle $. Also, the undirected graph $\mathcal{G}$ is described by $\mathcal{G}=(\mathcal{V}_{\mathcal{G}},\mathcal{E}_{\mathcal{G}})$ where $\mathcal{V}_{\mathcal{G}}$ is the vertex set and $\mathcal{E}_{\mathcal{G}} \subseteq \mathcal{V}_{\mathcal{G}} \times \mathcal{V}_{\mathcal{G}}$ is the edge set (Since graph $\mathcal{G}$ is undirected, $(i,j) \in \mathcal{E}_{\mathcal{G}}$ if and only if $(j,i) \in \mathcal{E}_{\mathcal{G}}$).} The symbol $\rho(A)$ specifies the maximum eigenvalue of real symmetric matrix $A$. In addition to the above-mentioned notations, \textcolor{black}{we use $\mathcal{D}(f)$ and $\mathcal{R}(f)$ to denote the domain and range of function $f: \mathcal{D}(f) \rightarrow \mathcal{R}(f)$, respectively.} \textcolor{black}{Moreover, the following definition is used in the next sections.}
	\begin{Def}(Monotonicity)
		\label{def.monotone}
		Function $g: \mathcal{D}(g)_{\subseteq \R^n} \rightarrow \R^n$ is monotonic if $\left\langle {x - y,g(x) - g(y)} \right\rangle  \ge 0$ for all $x,y \in \mathcal{D}(g)$. Also, $g$ is strictly monotonic if $\left\langle {x - y,g(x) - g(y)} \right\rangle  > 0$ for all non-equal $x,y \in \mathcal{D}(g)$.
	\end{Def}
	
	\subsection{\label{sec:Organization}Organization}
	
	The remainder of the paper is organized as follows. The \textcolor{black}{under study} problem is formulated with some basic assumptions in Section \ref{sec:problemstatement}. Main results of the paper are presented in Section \ref{sec:mainresults}. Section \ref{sec.application} deals with introducing sample applications \textcolor{black}{and some examples to illustrate the  theoretical results}. Finally, Section \ref{sec:conclusion} concludes the paper.
	
	\section{\label{sec:problemstatement} Problem Statement}
	
	Consider a network of $n$ agents ($n \in \N$) whose communication topology is described by the undirected and connected graph $\mathcal{G}=(\mathcal{V}_{\mathcal{G}},\mathcal{E}_{\mathcal{G}})$.  \textcolor{black}{The agent $i$ ($i=1,...,n$) in this network is described by the \textcolor{black}{uncertain} nonlinear dynamic model}
	\begin{align}
		\label{dynamic}
		{\dot x_i} = \left( {{p_i}\left( {{x_i}} \right) + \Delta {p_i}\left( {{x_i}} \right)} \right) + \left( {{b_i} + \Delta {b_i}} \right){u_i},
	\end{align}
	\textcolor{black}{where ${x_i}(t) \in \R$, $b_i \in \R$ is a known parameter \textcolor{black}{(as the nominal value of the input gain)}, and ${p_i}: \R \rightarrow \R$ is a known function \textcolor{black}{(as the nominal dynamic function of the model)}. Furthermore, $\Delta b_i \in \R$ and $\Delta p_i: \R \rightarrow \R$ are an unknown constant and an unknown function, respectively. Assume that the agent $i$ in the above-mentioned network has the objective function ${f_i}:\mathbb{R} \to \mathbb{R}$ for $i=1,...,n$}. The aim is to control \textcolor{black}{the agents }of the multi-agent system in such a way that they cooperatively solve the optimization problem
	\begin{align}
		\label{optimization}
		\left\{\begin{matrix*}[l]
			\min\limits_{x \in \R^n} & f(x) = \sum\limits_{i = 1}^n {f_i}({x_i})  \\
			~{\rm s.t.}  & Cx = d
		\end{matrix*},\right.
	\end{align}
	where $x \coloneqq \begin{bmatrix}x_1&x_2&...&x_n\end{bmatrix}^{\top}$, $C \in \R^{m \times n}$ $(m \leq n)$ \textcolor{black}{is a full row-rank matrix}, and $d=\begin{bmatrix} d_1&d_2&...&d_m\end{bmatrix}^{\top} \in \R^m$.
	
	In this paper, the following assumptions \textcolor{black}{on multi-agent system \eqref{dynamic} are considered.}
	\color{black}
	\begin{As}
		{[Assumptions on the dynamic model of the agents]}
		\label{as.coeff}
		Consider the multi-agent system~\eqref{dynamic}. In this system,
		\begin{enumerate}
			\item (Lipschitz smoothness assumption)
			\label{as.lipschitz}
			the function $p_i(x_i)$ is smooth and Lipschitz with parameter $L_{p_i}$ for all $i=1,...,n$.
			\item (Boundedness assumption)
			\label{as.bounded}
			functions $\frac{{\partial }}{{\partial {x_i}}}{p_i}({x_i})$, $\Delta p_i(x_i)$, and $\frac{\partial }{{\partial {x_i}}}\Delta {p_i}({x_i})$ are bounded for all $i=1,...,n$. Furthermore, there is a known bound $\rho_b$ such that $|\Delta b_i| \leq \rho_b$ for all $i=1,...,n$.
		\end{enumerate}
	\end{As}
	\textcolor{black}{The uncertain nonlinear model~\eqref{dynamic} with Assumption~\ref{as.coeff} provides a rich dynamical framework emerging in a wide range of problems. For example, the unknown linear dynamic systems under the influence of sinusoidal disturbances considered in \cite{bacsturk2012adaptive,pyrkin2010rejection} are particular cases of~\eqref{dynamic} with Assumption~\ref{as.coeff}. \textcolor{black}{As another example, \cite{wang2017distributed} has considered a class of continuous-time multi-agent systems influenced by unknown-frequency disturbances. This class is a special form of ~\eqref{dynamic} meeting Assumption~\ref{as.coeff}. Moreover, an unknown linear time invariant multi-agent system with bounded uncertainties can be considered as a special case of ~\eqref{dynamic} which satisfies Assumption~\ref{as.coeff}.}}
	
	In addition to the assumptions on dynamic model of the multi-agent system~\eqref{dynamic}, in this paper we consider the following assumptions on the optimization problem \eqref{optimization}.
	
	\begin{As}%
		{[Assumptions on the optimization problem]}
		\label{as.optimization}
		Consider the constrained optimization problem~\eqref{optimization}. In this problem, 
		\color{black}
		\begin{enumerate}
			\item (\textcolor[rgb]{0,0,0}{Assumptions on the convexity of the objective function and privacy})
			\label{as.convex}
			the function $f_i(x_i)$ is strictly convex and twice differentiable with bounded and continuous second derivative for all $i=1,...,n$. Also, the function $f_i(x_i)$ is only known for the agent $i$, and the other agents are not aware of it.
			\item (\textcolor[rgb]{0,0,0}{Assumption on the compatibility between the constraints and communication graph})
			\textcolor{black}{in the optimization problem~\eqref{optimization}, consider matrix $C$ as $C = \left[ c_{ij} \right]_{m \times n} = \begin{bmatrix}C_1^\top & \dotsc&C_m^\top\end{bmatrix}^\top \in \R^{m \times n}$ where the constraints in this optimization problem are $\left\{C_i^\top x-d_i = 0\right\}_{1 \leq i \leq m}$. It is assumed that matrix $C$ is compatible with the communication graph $\mathcal{G}=(\mathcal{V}_{\mathcal{G}},\mathcal{E}_{\mathcal{G}})$. This assumption means that if the state variables of two agents $j$ and $k$ are appeared in the $i^{th}$ constraint (or equivalently \textcolor{black}{$c_{ij} \times c_{ik} \neq 0$}), then $(j,k) \in \mathcal{V}_{\mathcal{G}}$.} \textcolor{black}{ Furthermore, it is assumed that each agent has access to the information of the state variables of its neighbors and the constraints involving such an agent (i.e., the constraint equation(s) in which the state variable of the agent is involved).}
			\item (Assumption on the spectral radius of matrix $C$)
			without loss of generality, it is assumed that $\rho(C^{\top}C) = 1$ (If $\rho(C^{\top}C) = p$, substituting the equality constraint of the optimization problem \eqref{optimization} by $C^\prime x=d^\prime$, where $C^\prime=C/ \sqrt{p}$ and $d^\prime=d/ \sqrt{p}$, satisfies this assumption).   
		\end{enumerate}
	\end{As}
	{It is worth noting that there are some widely used classes of cost functions satisfying the boundedness condition in Assumption~\ref{as.optimization}-\ref{as.convex}, e.g. the quadratic cost functions and the cost functions with Lipschitz continuous gradients. Some research works such as \cite{yi2016initialization,ghadimi2013multi} have considered these forms of cost functions in their studies.}
	\newline
	\color{black}
	In addition to the above assumptions, assume that due to some network constraints and also communication and computation costs, agents cannot continuously share their state \textcolor{black}{variables} with their neighbors. \textcolor{black}{To deal with this constraint, there are two main approaches: (i) periodic communications and (ii) event-triggered mechanism based communications. It is well known that due to avoid unnecessary communications,	the second approach, i.e. using the event-triggered mechanisms, needs less communications in comparison with the former one. In this paper, we focus on the event-based approach to deal with this constraint}. In summary, the problem considered in the rest of the paper can be formulated as follows.
	\begin{Prob}
		\label{problem}
		Consider the multi-agent system \eqref{dynamic} under Assumption~\ref{as.coeff}. Also, assume that \textcolor{black}{the} agents of this system aim to cooperatively solve the optimization problem \eqref{optimization} which satisfies Assumption~\ref{as.optimization}. \textcolor{black}{To this aim, if $\mathcal{N}^i$ is the set of the indices of the agent $i$'s neighbors ($1 \leq i \leq n$), synthesize {$\{x_i\}_{i \in \{i,\mathcal{N}^i\}_{[0,t]}} \mapsto u_i(t)$} (as the control signal)} and a distributed triggering mechanism for communication between the agents such that $x(t)=\begin{bmatrix}x_1(t)&x_2(t)&...&x_n(t)\end{bmatrix}^{\top}$ converges to the the \textcolor{black}{solution} of the optimization problem \eqref{optimization}.
	\end{Prob}
	\section{\label{sec:mainresults} Main Results}
	
	The main results of the paper are presented in this section. 
	\color{black} At first, by extending the existing results in the literature, we obtain a distributed algorithm with discontinuous communications to solve a convex optimization with equality constraints. At the second step, benefiting from the proposed algorithm and some techniques in robust control, a solution for Problem~\ref{problem} is proposed. 
	\subsection{\label{subsec.event} Distributed optimization with event-triggered communications}
	
	In this section, a distributed technique for solving an optimization problem with centralized event-triggered communications is introduced. Then, this technique is improved in the viewpoint of using decentralized event-triggered mechanism, instead of the centralized one. Before presenting these results, we need to restate the
	\color{black}following lemma, which has been presented in \cite{cortes2018distributed}. 
	\begin{Lem} (Distributed optimization with equality constraints) \cite{cortes2018distributed}
		\label{lem.cortes}
		Consider the distributed constrained optimization problem
		\begin{align}
			\label{lem.opt}
			\left\{\begin{matrix*}[l]
				\min\limits_{y \in \R^n} & f(y) = \sum\limits_{i = 1}^n {f_i}({y_i})  \\
				~{\rm s.t.}  & Cy = d
			\end{matrix*},\right.
		\end{align}
		where $y_i \in \R$ for all $1\leq i\leq n$, $y \coloneqq\begin{bmatrix}y_1&y_2&...&y_n\end{bmatrix}^{\top}$, $C \in \R^{m \times n}$, and $d \in \R^m$. Also, it was assumed that $Rank(C)=m  \leq n$. Assume that $f$ is differentiable with locally Lipschitz partial derivatives. If $y^{\star} \coloneqq \begin{bmatrix} y_1^{\star}&y_2^{\star}&...&y_n^{\star} \end{bmatrix}^{\top} \in \R^n$ is the solution of problem \eqref{lem.opt} and $\nabla f$ is strictly monotonic, then the equilibrium point corresponding to $y=y^{\star}$  in dynamic system
		\begin{align}
			\label{lem.dynamical}
			\left \{\begin{matrix*}[l]
				\dot y(t) =  - \nabla f(y) - C^\top{(Cy - d)} - C^\top \mu,  \\
				\dot \mu (t) = Cy - d, 
			\end{matrix*} \right .,
		\end{align}
		is asymptotically stable.
	\end{Lem}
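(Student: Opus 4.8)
The plan is to prove the claim by a Lyapunov/LaSalle argument exploiting the primal--dual structure of \eqref{lem.dynamical}. First I would characterize the equilibrium. Setting the right-hand sides of \eqref{lem.dynamical} to zero, the second equation gives primal feasibility $Cy^{\star}=d$, and substituting this back the first equation reduces to $\nabla f(y^{\star})+C^{\top}\mu^{\star}=0$. Since $y^{\star}$ solves \eqref{lem.opt} and $\mathrm{Rank}(C)=m$ (so LICQ holds trivially), the first-order optimality conditions guarantee the existence of a multiplier $\mu^{\star}\in\R^{m}$ with $\nabla f(y^{\star})=-C^{\top}\mu^{\star}$, which is moreover unique because $C^{\top}$ is injective. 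Hence $(y^{\star},\mu^{\star})$ is an equilibrium of \eqref{lem.dynamical}; strict monotonicity of $\nabla f$ (which forces $y^{\star}$ to be the unique minimizer) together with injectivity of $C^{\top}$ shows it is the only one.

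Next I would introduce the quadratic candidate $V(y,\mu)=\tfrac12\|y-y^{\star}\|^{2}+\tfrac12\|\mu-\mu^{\star}\|^{2}$, which is positive definite and radially unbounded with compact, invariant sublevel sets. Differentiating along the trajectories of \eqref{lem.dynamical}, using $Cy-d=C(y-y^{\star})$ and $\nabla f(y^{\star})=-C^{\top}\mu^{\star}$, the indefinite cross terms $\pm\langle C(y-y^{\star}),\,\mu-\mu^{\star}\rangle$ coming from $-C^{\top}\mu$ in the $\dot y$ equation and from $\dot\mu$ cancel, and one is left with
\[
\dot V = -\big\langle y-y^{\star},\ \nabla f(y)-\nabla f(y^{\star})\big\rangle \;-\; \|C(y-y^{\star})\|^{2}\ \le\ 0 ,
\]
where the first term is nonpositive by monotonicity of $\nabla f$ and is strictly negative whenever $y\neq y^{\star}$ by strict monotonicity. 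Together with positive definiteness of $V$, this already establishes Lyapunov stability of $(y^{\star},\mu^{\star})$.

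Finally I would invoke LaSalle's invariance principle; this is legitimate because the right-hand side of \eqref{lem.dynamical} is locally Lipschitz (as $f$ has locally Lipschitz partial derivatives), so solutions exist and are unique, and they stay in a compact sublevel set of $V$. The set $\{\dot V=0\}$ is contained in $\{y=y^{\star}\}$, and on any trajectory that remains there one has $\dot y\equiv 0$, so the first equation of \eqref{lem.dynamical} forces $C^{\top}(\mu-\mu^{\star})=0$, whence $\mu=\mu^{\star}$ by full row rank of $C$. Thus the largest invariant subset of $\{\dot V=0\}$ is the single point $(y^{\star},\mu^{\star})$, so every trajectory converges to it; combined with the stability established above, the equilibrium corresponding to $y=y^{\star}$ is asymptotically stable (in fact globally, since $V$ is radially unbounded).

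I expect the main obstacle to be the reduction of the LaSalle invariant set to a single point: this is exactly where strict monotonicity of $\nabla f$ and full row rank of $C$ are both essential, since without strictness one would only obtain convergence of the primal variable to the optimal set, not pointwise convergence of $(y,\mu)$. The algebraic bookkeeping that produces the clean cancellation of the cross terms in $\dot V$ is routine but must be done carefully.
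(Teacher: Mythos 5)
Your argument is correct, but note that the paper itself gives no proof of this lemma: it is imported verbatim from \cite{cortes2018distributed}, so there is no in-paper proof to compare against line by line. On the merits, your derivation checks out: the equilibrium characterization via the KKT conditions (the linear constraints supply the constraint qualification, and full row rank of $C$ makes $C^{\top}$ injective, giving uniqueness of $\mu^{\star}$), the cancellation of the cross terms leaving $\dot V=-\langle y-y^{\star},\nabla f(y)-\nabla f(y^{\star})\rangle-\|C(y-y^{\star})\|^{2}\le 0$, and the LaSalle reduction of the largest invariant subset of $\{\dot V=0\}$ to the singleton $(y^{\star},\mu^{\star})$ are all sound; local Lipschitzness of $\nabla f$ gives uniqueness of solutions and the compact sublevel sets of the radially unbounded $V$ give precompactness, so the invariance principle legitimately applies. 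It is worth observing that your route differs from the one the paper uses for its own event-triggered descendants of this lemma (Theorems~\ref{thm.centralized} and~\ref{thm.decentralized}): there the authors follow \cite{richert2016distributed} and augment your quadratic $V$ with the extra terms $\tfrac12\|g(y,\mu)\|_2^2$ and $\tfrac12\|Cy-d\|_2^2$, precisely so that the derivative is bounded by the \emph{negative-definite} quantity $-h(\hat y,\hat\mu)$ rather than being merely negative semidefinite in $(y,\mu)$. That strict decay margin is what allows the sampling errors $e_y,e_\mu$ to be absorbed and the triggering rule to be designed; your leaner $V$ plus LaSalle suffices for the continuous-communication lemma as stated, but would not carry over directly to the event-triggered setting, which is presumably why the paper adopts the heavier Lyapunov function there.
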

	\color{black}
	On the basis of Lemma~\ref{lem.cortes}, we provide a centralized event-triggered framework which can guarantee the stability of $y=y^\star$ in the dynamic system~\eqref{lem.dynamical}. \textcolor{black}{In comparison to the results of ~\cite{cortes2018distributed}, the improvements of the achievements of the present paper are twofold. Firstly, for interactions between agents, discontinuous communication is replaced by the continuous one in order to reduce the communication costs. Secondly, the results of this paper are obtained by assuming that the involved agents are described by uncertain nonlinear models, whereas no dynamic model has been considered in~\cite{cortes2018distributed} for the agents.} 
	In the centralized event-triggered mechanism, it is assumed that the information of each agent (i.e., the value of $y_i$ for agent $i$) is synchronously broadcast to its neighbors at times $\{t_k\}_{k \in \N}$. The broadcast time $t_{k}$ is specified when a triggering condition is satisfied. For the brevity of the exposition, the following notations 
	\begin{align}
		\label{hat_update}
		\begin{matrix*}[l]
			{{\hat y}}(t) = \begin{bmatrix} \hat{y}_1(t) & ... & \hat{y}_n(t) \end{bmatrix}^\top \coloneqq \begin{bmatrix} {y_1}({t_k})&...&  {y_n}({t_k})\end{bmatrix}^\top, \\
			{{\hat \mu}}(t) = \begin{bmatrix} \hat{\mu}_1(t) & ... & \hat{\mu}_m(t) \end{bmatrix}^\top \coloneqq \begin{bmatrix} {\mu_1}({t_k})&...&  {\mu_m}({t_k})\end{bmatrix}^\top,\\
			e_y(t) \coloneqq y(t)-y(t_k),\quad e_\mu(t) \coloneqq \mu(t)-\mu(t_k),
		\end{matrix*}
	\end{align}
	\textcolor{black}{for ${t_k} \leq t < {t_{k + 1}}$}, which specify some piecewise constant functions, are used in the rest of the paper. Now, we are ready to proceed with the centralized mechanism in the next theorem.
	\color{black}
	\begin{Thm}  (Optimization with centralized event-triggered communications)
		\label{thm.centralized}
		\textcolor{black}{Consider the optimization problem~\eqref{lem.opt} under Assumption~\ref{as.optimization}.}
		Suppose that the data is synchronously transferred between the neighbors at times $\{t_k\}_{k \in \N}$, and $y_i(t)$ and $\mu_l(t)$ are updated at $t \in [t_k,t_{k+1})$ by
		\begin{align}
			\label{central.dynamic}
			\left\{\begin{matrix*}[l]
				{\dot y_i}(t) = {g_i}\left( {\hat y(t),\hat \mu (t)} \right); & 1\leq i \leq n\\
				{\dot \mu }_l(t) = \sum\limits_{j = 1}^n {{c_{lj}}}{{\hat y}_j}(t) - d_l; & 1\leq l \leq m
			\end{matrix*}\right.,
		\end{align}
		where
		\begin{align}
			\nonumber
			{g_i}\left( {\hat y(t),\hat \mu (t)} \right) = 
			- \nabla {f_i}\left( {{{\hat y}_i}(t)} \right) - \sum\limits_{k = 1}^m {\left[ {\left( {\sum\limits_{j = 1}^n {{c_{kj}}{{\hat y}_j}(t) - {d_k}} } \right){c_{ki}}} \right] - \sum\limits_{l = 1}^m {{{\hat \mu }_l(t)}{c_{li}}} }, 
		\end{align}
		\textcolor{black}{and $\hat y(t)$ and $\hat \mu (t)$ are defined by \eqref{hat_update}}. Also, assume that $
		0 < \underset{\raise0.3em\hbox{$\smash{\scriptscriptstyle-}$}}{M} I \leqslant {\nabla ^2}f \leqslant \bar MI.$
		Let
		\begin{align}
			\label{eq.h_definition}
			h(\hat y(t),\hat \mu(t) ) = \left. {\left( {1 - 2\kappa } \right)\left\| {C\hat y(t) - d} \right\|_2^2} \right.
			+\left. {\left( {\underset{\raise0.3em\hbox{$\smash{\scriptscriptstyle-}$}}{M}  - \frac{{5 + 3 \bar M}}{2}\kappa } \right)\left\| {g\left( {\hat y(t),\hat \mu(t) } \right)} \right\|_2^2} \right.,   
		\end{align}
		where $g\left(\hat y(t),\hat \mu(t) \right) \coloneqq \begin{bmatrix}g_1(\hat y(t),\hat \mu(t))&...&g_n(\hat y(t),\hat \mu(t)) \end{bmatrix}^\top$ and
		\begin{align}
			\label{def.kappa}
			0<\kappa  < \min \left\{ {\dfrac{1}{{2}},\dfrac{{2}\underset{\raise0.3em\hbox{$\smash{\scriptscriptstyle-}$}}{M} }{{5+ 3\bar M}}} \right\}.
		\end{align}
		\textcolor{black}{S}pecify the event for triggering such that $\hat{y}_i(t)$ and $\hat{\mu_l}(t)$ are updated when condition
		\begin{align}
			\label{central.trigger}
			\left\{\begin{matrix*}[l]
				\frac{{2 + \bar M}}{\kappa }\left\| {{e_y(t)}} \right\|_2^2 + \frac{3}{{2\kappa }}\left\| {{e_\mu(t) }} \right\|_2^2 > h(\hat y(t),\hat \mu(t) ), \vspace{1mm}\\
				or \vspace{1mm}\\
				\left\{\begin{matrix*}[l]\frac{{2 + \bar M}}{\kappa }\left\| {{e_y(t)}} \right\|_2^2 + \frac{3}{{2\kappa }}\left\| {{e_\mu(t) }} \right\|_2^2 = h(\hat y(t),\hat \mu(t) ), \vspace{1mm}\\
					and \vspace{1mm}\\ \left\| {{e_y(t)}} \right\|_2^2 + \left\|{{e_\mu(t) }} \right\|_2^2 \neq 0,
				\end{matrix*}\right.
			\end{matrix*}\right.
		\end{align}
		holds. In this case, the equilibrium point corresponding to $y = {y^{\star}} = \begin{bmatrix} y_1^{\star}&...&y_n^{\star} \end{bmatrix}^{\top} \in {\mathbb{R}^n}$, as the solution of the optimization problem \eqref{lem.opt}, in dynamic system \eqref{central.dynamic} is asymptotically stable.
	\end{Thm}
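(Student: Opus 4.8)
The plan is to lift the Lyapunov certificate behind Lemma~\ref{lem.cortes} to the sampled setting and to show that the rule \eqref{central.trigger} is exactly calibrated to absorb the mismatch caused by using the held values $\hat y,\hat\mu$ in place of $y,\mu$. Let $\mu^\star\in\R^m$ denote the Lagrange multiplier of \eqref{lem.opt} attached to $y^\star$, which is unique since $\mathrm{Rank}(C)=m$; hence $\nabla f(y^\star)=-C^\top\mu^\star$ and $Cy^\star=d$, and because $\nabla f$ is strictly monotone by Assumption~\ref{as.optimization}\ref{as.convex}, $(y^\star,\mu^\star)$ is the unique equilibrium of \eqref{central.dynamic} obtained when $e_y=e_\mu=0$. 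I would use as Lyapunov candidate the squared distance to this point,
\[
V(y,\mu)=\tfrac12\|y-y^\star\|_2^2+\tfrac12\|\mu-\mu^\star\|_2^2 ,
\]
which is positive definite, radially unbounded, and continuous along the closed loop (between broadcasts only $\hat y,\hat\mu$ are held constant, while $y,\mu$ stay continuous and piecewise $C^1$, and they do not jump at a broadcast).

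First I would differentiate $V$ along \eqref{central.dynamic} on an inter-event interval $[t_k,t_{k+1})$, insert $\dot y=g(\hat y,\hat\mu)$ and $\dot\mu=C\hat y-d$, and split $y=\hat y+e_y$, $\mu=\hat\mu+e_\mu$. Using $\nabla f(y^\star)=-C^\top\mu^\star$ and $Cy^\star=d$, the cross terms pairing $\mu-\mu^\star$ with $C\hat y-d$ cancel (this is the event-triggered image of the saddle cancellation in Lemma~\ref{lem.cortes}), leaving
\[
\dot V=-\big\langle \hat y-y^\star,\ \nabla f(\hat y)-\nabla f(y^\star)\big\rangle-\|C\hat y-d\|_2^2+\big\langle e_y,\ g(\hat y,\hat\mu)\big\rangle+\big\langle e_\mu,\ C\hat y-d\big\rangle ,
\]
i.e. the nominal dissipation evaluated at the held sample plus the two measurement-error penalties.

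Next I would bound the nominal dissipation from below and the error penalties from above. The sandwich $0<\underline{M}I\le\nabla^2 f\le\bar{M}I$ gives $\langle \hat y-y^\star,\nabla f(\hat y)-\nabla f(y^\star)\rangle\ge\underline{M}\|\hat y-y^\star\|_2^2$ together with $\|\nabla f(\hat y)-\nabla f(y^\star)\|_2\le\bar{M}\|\hat y-y^\star\|_2$; with the normalization $\rho(C^\top C)=1$ (so $\|C\|=\|C^\top\|=1$) and the identity $g(\hat y,\hat\mu)=-(\nabla f(\hat y)-\nabla f(y^\star))-C^\top(C\hat y-d)-C^\top(\hat\mu-\mu^\star)$, the inner products $\langle e_y,g(\hat y,\hat\mu)\rangle$ and $\langle e_\mu,C\hat y-d\rangle$ can be split by Young's inequality with weights tuned to $\kappa$, the target being
\[
\dot V\ \le\ -h\big(\hat y(t),\hat\mu(t)\big)+\frac{2+\bar{M}}{\kappa}\|e_y(t)\|_2^2+\frac{3}{2\kappa}\|e_\mu(t)\|_2^2 ,
\]
with $h$ as in \eqref{eq.h_definition}; the range \eqref{def.kappa} of $\kappa$ is precisely what makes the coefficients $1-2\kappa$ and $\underline{M}-\tfrac{5+3\bar{M}}{2}\kappa$ of $h$ strictly positive, so that $h\ge 0$ and $h(\hat y,\hat\mu)=0$ iff $C\hat y=d$ and $g(\hat y,\hat\mu)=0$. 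Then the triggering rule closes the loop: on $[t_k,t_{k+1})$, before any event, \eqref{central.trigger} enforces $\frac{2+\bar{M}}{\kappa}\|e_y\|_2^2+\frac{3}{2\kappa}\|e_\mu\|_2^2\le h(\hat y,\hat\mu)$, so $\dot V\le 0$; since $e_y,e_\mu$ are reset to zero at each event, $V$ is nonincreasing along the whole trajectory, the trajectory is bounded, and the equilibrium is Lyapunov stable. For asymptotic stability I would apply a LaSalle-type invariance argument in the impulsive setting: on the largest invariant set inside $\{\dot V\equiv0\}$ one must have $h(\hat y,\hat\mu)\equiv0$, hence $C\hat y\equiv d$ and $g(\hat y,\hat\mu)\equiv0$; evaluating $g$ there and using $C\hat y=d$ forces $\nabla f(\hat y)+C^\top\hat\mu=0$, which with feasibility is the KKT system of \eqref{lem.opt}, so $\hat y\equiv y^\star$, $\hat\mu\equiv\mu^\star$ (the latter by $\mathrm{Rank}(C)=m$), whence $\dot y\equiv\dot\mu\equiv0$ and the trajectory is frozen at $(y^\star,\mu^\star)$; thus $y(t)\to y^\star$. (Incidentally, since the dynamics are constant between events one has $e_y(t)=(t-t_k)g(\hat y,\hat\mu)$ and $e_\mu(t)=(t-t_k)(C\hat y-d)$, which already yields a positive lower bound on $t_{k+1}-t_k$ whenever $(g(\hat y,\hat\mu),C\hat y-d)\ne0$; Zeno-freeness is, however, only invoked in Theorem~\ref{thm.decentralized}.)

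I expect the main obstacle to be the middle step: arranging the chain of inequalities so that what remains after bounding the nominal dissipation and the two error penalties is exactly $-h+\frac{2+\bar{M}}{\kappa}\|e_y\|_2^2+\frac{3}{2\kappa}\|e_\mu\|_2^2$, with those precise constants and under the sole restriction \eqref{def.kappa} on $\kappa$. This pins down the decomposition of $g(\hat y,\hat\mu)$ and the Young-inequality weights to be used, and it is the one place where genuine computation—rather than the structural reasoning above—is needed.
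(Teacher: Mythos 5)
Your structural setup (showing $(y^\star,\mu^\star)$ is an equilibrium of the sampled dynamics, the cancellation of the primal--dual cross terms, the target inequality $\dot V\le -h+\frac{2+\bar M}{\kappa}\|e_y\|_2^2+\frac{3}{2\kappa}\|e_\mu\|_2^2$, and the use of the trigger to close the loop) mirrors the paper, but there is a genuine gap exactly at the step you flag as ``the main obstacle,'' and it is not a matter of tuning Young weights: with your two-term candidate $V=\frac12\|y-y^\star\|_2^2+\frac12\|\mu-\mu^\star\|_2^2$ the target inequality is false. Your own (correct) computation gives, on an inter-event interval,
\begin{align}
\nonumber
\dot V=-\left\langle \hat y-y^\star,\nabla f(\hat y)-\nabla f(y^\star)\right\rangle-\left\| C\hat y-d\right\|_2^2+\left\langle e_y,g(\hat y,\hat\mu)\right\rangle+\left\langle e_\mu,C\hat y-d\right\rangle,
\end{align}
so the only strict dissipation available is $-\underline{M}\|\hat y-y^\star\|_2^2-\|C\hat y-d\|_2^2$; there is no negative term in the dual direction. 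Now evaluate at a point with $\hat y=y^\star$, $e_y=e_\mu=0$, $\hat\mu\neq\mu^\star$: then $\dot V=0$, while $g(\hat y,\hat\mu)=-C^\top(\hat\mu-\mu^\star)\neq 0$ (full row rank of $C$), so $-h(\hat y,\hat\mu)=-\bigl(\underline{M}-\tfrac{5+3\bar M}{2}\kappa\bigr)\|g\|_2^2<0$ under \eqref{def.kappa}. Hence $\dot V>-h$ there, and no choice of Young weights can repair this, because the $\|g(\hat y,\hat\mu)\|_2^2$ term in \eqref{eq.h_definition} carries an uncontrolled $\|\hat\mu-\mu^\star\|_2^2$ contribution that your $V$ simply does not dissipate. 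Worse, even plain $\dot V\le 0$ fails under the trigger \eqref{central.trigger}: in that same region $h>0$, so the trigger permits $\|e_y\|_2^2$ up to $\kappa h/(2+\bar M)$, and $\langle e_y,g\rangle$ can then be strictly positive while the dissipation terms vanish.

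The missing ingredient is the one the paper takes from \cite{richert2016distributed}: augment the candidate to the four-term function $v=\frac12\|g(y,\mu)\|_2^2+\frac12\|Cy-d\|_2^2+\frac12\|y-y^\star\|_2^2+\frac12\|\mu-\mu^\star\|_2^2$ as in \eqref{lyapunov}. The derivative of the first term contributes $-g(\hat y,\hat\mu)^\top(\nabla^2 f+C^\top C)g(\hat y,\hat\mu)\le-\underline{M}\|g(\hat y,\hat\mu)\|_2^2$, which is precisely the dual-direction dissipation needed to assemble $-h$ and to absorb the $\|g\|_2^2$ remainders produced by Young's inequality; the second term cancels the $(C\hat y-d)^\top Cg$ cross term generated by the first. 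With that candidate the bound $\dot v\le\frac{2+\bar M}{\kappa}\|e_y\|_2^2+\frac{3}{2\kappa}\|e_\mu\|_2^2-h(\hat y,\hat\mu)$ does come out with the stated constants, the trigger makes $\dot v$ negative definite, and no LaSalle argument is needed. Your closing observation that $e_y(t)=(t-t_k)g(\hat y,\hat\mu)$ and $e_\mu(t)=(t-t_k)(C\hat y-d)$ between events is correct and is essentially how the dwell-time claim is handled later, but it does not rescue the stability argument for the two-term $V$.
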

	\begin{proof}
		Assume that $(y^{\star},\mu^{\star})$ is the equilibrium point of dynamic system \eqref{lem.dynamical}. \textcolor{black}{Firstly, we prove that $(y^{\star},\mu^{\star})$ is also an equilibrium point for system \eqref{central.dynamic}. If $(y^{\star},\mu^{\star})$ is the equilibrium point of dynamic system \eqref{lem.dynamical}, then we should have $Cy^\star=d$, or equivalently $C_l^\top y^\star = d_l$ for all $1 \leq l \leq m$. This means that $\sum\limits_{j = 1}^n {{c_{lj}}}{{y}_j^\star} - d_l = 0$ for all $1 \leq l \leq m$. Moreover, if $(y^{\star},\mu^{\star})$ is the equilibrium point of dynamic system \eqref{lem.dynamical}, then $-\nabla f(y^\star) - C^\top \mu^\star =0 $. Form this, it is deduced that $-\nabla f_i(y_i^\star) - \sum\limits_{l = 1}^m {{{ \mu }_l^\star}{c_{li}}} = 0$. According to equalities $\sum\limits_{j = 1}^n {{c_{lj}}}{{y}_j^\star} - d_l = 0$ for all $1 \leq l \leq m$ and $-\nabla f_i(y_i^\star) - \sum\limits_{l = 1}^m {{{ \mu }_l^\star}{c_{li}}} = 0$, it is found that $(y^{\star},\mu^{\star})$ is the equilibrium point of system~\eqref{central.dynamic}. To} prove the \textcolor{black}{asymptotic} stability of this equilibrium point in dynamic system \eqref{central.dynamic}, a Lyapunov function candidate is proposed and it is shown that its \textcolor{black}{time-}derivative is negative-definite with respect to dynamic model \eqref{central.dynamic}. Similar to that done in \cite{richert2016distributed}, the Lyapunov function candidate is chosen as $v(y,\mu ) = \sum\nolimits_{i = 1}^4 {{v_i}\left( {y,\mu } \right)} $, where
		\begin{align}
			\label{lyapunov}
			\left\{\begin{matrix*}[l]
				{{v_1}(y,\mu ) = \frac{1}{2}\left\| {g\left( {y,\mu } \right)} \right\|_2^2}, &
				{{v_2}(y,\mu ) = \frac{1}{2}\left\| {Cy - d} \right\|_2^2},\vspace{1mm}\\
				{{v_3}(y,\mu ) = \frac{1}{2}\left\| {y - {y^{\star}}} \right\|_2^2},&
				{{v_4}(y,\mu ) = \frac{1}{2}\left\| {\mu  - {\mu ^{\star}}} \right\|_2^2.}
			\end{matrix*}\right.
		\end{align}
		The derivative of \textcolor{black}{the} Lyapunov function candidate equals to 
		\begin{align}
			\nonumber
			\dot v(y,\mu ) = \sum\limits_{i = 1}^4 {\left( {\frac{\partial }{{\partial y}}{v_i}(y,\mu ) \cdot g\left( {\hat y,\hat \mu } \right) + \frac{\partial }{{\partial \mu }}{v_i}(y,\mu ) \cdot \left( {C\hat y - d} \right)} \right)}
		\end{align}
		By some computations, ${\dot v_i}(y,\mu )$ ($1\leq i \leq 4$) is obtained by
		\begin{align}
			\nonumber
			{{\dot v}_1}(y,\mu ) = & - g{\left( {\hat y,\hat \mu } \right)^{\top}}\left( {{\nabla ^2}f + {C^{\top}}C} \right)g\left( {\hat y,\hat \mu } \right)
			\\\nonumber &+ e_y^{\top}{C^{\top}}C\left( {C\hat y - d} \right) + e_\mu ^{\top}C\left( {C\hat y - d} \right)\\ \nonumber & +{\left( {\nabla f(y) - \nabla f(\hat y)} \right)^{\top}}\left( {{\nabla ^2}f + {C^{\top}}C} \right)g\left( {\hat y,\hat \mu } \right) \\ \nonumber & + {\left( {C\hat y - d} \right)^{\top}}Cg\left( {\hat y,\hat \mu } \right) \\ \nonumber
			&+e_y^{\top}{C^{\top}}C\left( {{\nabla ^2}f + {C^{\top}}C} \right)g\left( {\hat y,\hat \mu } \right) \\ \nonumber &+ e_\mu ^{\top}C\left( {{\nabla ^2}f + {C^{\top}}C} \right)g\left( {\hat y,\hat \mu } \right)
			\\ \nonumber&+ {\left( {\nabla f(y) - \nabla f(\hat y)} \right)^{\top}}{C^{\top}}\left( {C\hat y - d} \right), \\ \nonumber
			{\dot v_2}(y,\mu ) =  &- {\left( {C\hat y - d} \right)^{\top}}Cg\left( {\hat y,\hat \mu } \right) - e_y^{\top}{C^{\top}}Cg\left( {\hat y,\hat \mu } \right),\\
			\nonumber
			{{\dot v}_3}(y,\mu ) = & \,\, e_y^{\top}g\left( {\hat y,\hat \mu } \right) - {\left( {\hat y - {y^{\star}}} \right)^{\top}}\left( {\nabla f(\hat y) - \nabla f({y^{\star}})} \right)
			\\ \nonumber &- {\left( {\hat y - {y^{\star}}} \right)^{\top}}{C^{\top}}C\left( {\hat y - {y^{\star}}} \right) \\\nonumber&- {\left( {\hat y - {y^{\star}}} \right)^{\top}}{C^{\top}}\left( {\hat \mu  - {\mu ^{\star}}} \right),\\
			\nonumber
			{\dot v_4}(y,\mu ) = &\,\, e_\mu ^{\top}\left( {C\hat y - d} \right) + {\left( {\hat \mu  - {\mu ^{\star}}} \right)^{\top}}{C^{\top}}\left( {\hat y - {y^{\star}}} \right).
		\end{align}
		By using the Young inequality
		\begin{align}
			\nonumber
			{a^{\top}}b \leqslant \frac{\kappa }{2}{a^{\top}}a + \frac{1}{{2\kappa }}{b^{\top}}b;\qquad \forall a,b \in {\mathbb{R}^n},\,\kappa  > 0,
		\end{align}
		and the facts $0 \leqslant {x^{\top}}{C^{\top}}Cx \leqslant \|x\|^2$, and
		\begin{align}
			\nonumber
			\underset{\raise0.3em\hbox{$\smash{\scriptscriptstyle-}$}}{M} \left\| {x - y} \right\|_2^2 \leqslant {\left( {\nabla f(x) - \nabla f(y)} \right)^{\top}}\left( {x - y} \right) \leqslant \bar M\left\| {x - y} \right\|_2^2,
		\end{align}
		it can be easily found an upper bound for $\dot{v}(t)$ as follows.
		\begin{multline}
			\nonumber
			\dot v(t) \leqslant \frac{{2 + \bar M}}{\kappa }\left\| {{e_y}} \right\|_2^2 + \frac{3}{{2\kappa }}\left\| {{e_\mu }} \right\|_2^2- \left( {\underset{\raise0.3em\hbox{$\smash{\scriptscriptstyle-}$}}{M}  - \frac{{5 + 3\bar M}}{2}\kappa } \right)\left\| {g\left( {\hat y,\hat \mu } \right)} \right\|_2^2- \left( {1 - 2\kappa } \right)\left\| {C\hat y - d} \right\|_2^2 \\
			{\text{              }} = \frac{{2 + \bar M}}{\kappa }\left\| {{e_y}} \right\|_2^2 + \frac{3}{{2\kappa }}\left\| {{e_\mu }} \right\|_2^2 - h\left( {\hat y,\hat \mu } \right).
		\end{multline} 
		If parameter $\kappa$ meets condition \eqref{def.kappa} and $\hat{y}(t)$ and $\hat{\mu}(t)$ are updated when \eqref{central.trigger} holds, $\dot{v}(t)$ will be negative-definite. Hence, in this case the equilibrium point $(y^{\star},\mu^{\star})$  is asymptotically stable in dynamic system \eqref{central.dynamic}.
	\end{proof}
	\color{black}
	\begin{Rem}[Relaxation of the considered assumptions]
		Theorem~\ref{thm.centralized} was proved on the basis of Lemma~\ref{lem.cortes} which has been borrowed from~\cite{cortes2018distributed}. In the Reference~\cite{cortes2018distributed}, an augmented Lagrangian has been used to introduce the dynamic model~\eqref{lem.dynamical} and prove its convergence. The use of this augmented Lagrangian makes it necessary for agent $i$ ($1 \leq i \leq n$) to know the state information of all the other agents that are present with agent $i$ in at least one of the constraints in order to use dynamic~\eqref{lem.dynamical}. For this reason, in Assumption~\ref{as.optimization}, the compatibility between the matrix $C$ and graph $\mathcal{G}$  has been assumed. An idea to relax this assumption is to use the regular Lagrangian instead of the augmented one. In this case, the agents should use a dynamic model in the form
		\begin{align}
			\nonumber
			\left \{\begin{matrix*}[l]
				\dot y(t) =  - \nabla f(y) - C^\top \mu,  \\
				\dot \mu (t) = Cy - d, 
			\end{matrix*} \right .,
		\end{align}
		instead of the dynamic model~\eqref{lem.dynamical} to reach the convergence. As a consequence, the speed of the convergence will be naturally decreased in such a case. This idea invites further research works in the continuation of the study of this paper with the aim of relaxing the considered assumptions.
	\end{Rem}
	\color{black}
	Theorem~\ref{thm.centralized} introduced a method which can solve the optimization problem \eqref{lem.opt} by using a centralized {mechanism} with no need to continuous communications between the agents. The results of Theorem~\ref{thm.centralized} will be modified in Theorem~\ref{thm.decentralized} by introducing a decentralized version of the {mechanism}. In the modified method, each agent individually decides about the time of broadcasting its state \textcolor{black}{variable} information to neighbors according to the available local data (state \textcolor{black}{variables} information of its neighbors). The modified algorithm is proposed on the basis of the ideas presented in \cite{nowzari2016distributed}. Synchronous broadcasting at the situation \textcolor[rgb]{0,0,0}{that the broadcast times are not sufficiently far from each others} is the idea borrowed from paper \cite{nowzari2016distributed}. This idea is modified to introduce a distributed algorithm \textcolor[rgb]{0,0,0}{for solving} Problem~\ref{problem}.
	
	\color{black} Before introducing the decentralized version of the algorithm, we need to define the concept of virtual agents for the considered multi-agent system. Using this concept helps us to decentralize the communications in the under-study multi-agent system. \color{black} Assume that there are $m$ virtual agents in addition to $n$ primary (real) agents. Each of these virtual agents updates one of Lagrangian variables $\mu_l$ ($1 \leq l \leq m$). For simplicity, suppose that these virtual agents are indexed by $\{n+1,n+2,...,n+m\}$, and from here on, it is assumed that there are \textcolor{black}{$n+m$ agents ($n$ real agents and $m$ virtual ones) in the augmented multi-agent system}. Also, in this new augmented multi-agent system, we assume that each virtual agent has only primary (real) neighbors, and there are no two virtual neighbors. The primary (real) agent $i$ \textcolor[rgb]{0,0,0}{($i=1,...,n$)} is the neighbor of the virtual agent $n+j$ if and only if $c_{ij} \neq 0$, where $C=[c_{ij}]$ is the constraint matrix in the optimization problem~\eqref{lem.opt}. 
	\textcolor{black}{In a brief form, if $\mathcal{A}^{a} \in \R^{(m+n) \times (m+n)}$ is the adjacency matrix of the augmented communication graph (which is shown by $\mathcal{G}^a$), then
		\begin{align}
			\nonumber
			\mathcal{A}^a = \begin{bmatrix}
				\mathcal{A} & \mathcal{A}_{\mu} \\
				\mathcal{A}_{\mu}^\top & 0
			\end{bmatrix},
		\end{align}
		where $\mathcal{A}$ is the adjacency matrix of graph $\mathcal{G}$, and the $(i,j)$ element of $\mathcal{A}_{\mu}$ is equal to $1$ ($0$) if and only if $c_{ij} \neq 0$ ($c_{ij} = 0$).} Theorem~\ref{thm.decentralized} \textcolor{black}{can be applied to introduce} a decentralized algorithm for solving optimization problem \eqref{lem.opt} with discontinuous event-triggered communications.

	\begin{Thm} (Optimization with decentralized event-triggered communications)
		\label{thm.decentralized}
		\textcolor{black}{Consider the optimization problem~\eqref{lem.opt} under Assumptions~\ref{as.optimization} and $
			0 < \underset{\raise0.3em\hbox{$\smash{\scriptscriptstyle-}$}}{M} I \leqslant {\nabla ^2}f \leqslant \bar MI$, and the above-described augmented multi-agent system with $n+m$ agents ($n$ real agents with the state variables $y_i$ for $i=1,...,n$ and $m$ virtual ones with the state variables $\mu _{i-n}$ for $i=n+1,...,n+m$).} 
		\textcolor{black}{Assume that the agents update their state variables as}
		\textcolor{black}{
			\begin{align}
				\label{decentral.dynamic}
				\left\{\begin{matrix*}[l]
					{\dot y_i}(t) &= {g_i}\left( {\tilde y(t),\tilde \mu (t)} \right), & 1\leq i \leq n\\
					{\dot \mu }_{i-n}(t) &= \sum\limits_{j = 1}^n {{c_{(i-n)j}}}\,{{\tilde y}_j}(t) - d_{i-n}, & n+1\leq i \leq n+m
				\end{matrix*}\right.,
			\end{align}
			where
			\begin{align}
				\label{tilde.def}
				\begin{matrix*}[l]
					{{\tilde y}}(t) = \begin{bmatrix} \tilde{y}_1(t) & ... & \tilde{y}_n(t) \end{bmatrix}^\top = \begin{bmatrix} {y_1}({t^1_{k_1}})&...&  {y_n}({t^n_{k_n}})\end{bmatrix}^\top, \\
					{{\tilde \mu}}(t) = \begin{bmatrix} \tilde{\mu}_1(t) & ... & \tilde{\mu}_m(t) \end{bmatrix}^\top = \begin{bmatrix} {\mu_1}({t^{n+1}_{k_{n+1}}})&...&  {\mu_m}({t^{n+m}_{k_{n+m}}})\end{bmatrix}^\top
				\end{matrix*}
			\end{align}
			$t^i_{k_i} \in \{t^i_{k}\}_{k \in \N}$ is the last broadcast time of the information of the agent $i$ ($i=1,...,n+m$) before time $t$, and function $g_i$ is as that defined in Theorem~\ref{thm.centralized} \footnote{\color{black}The agent $i$ for using function $g_i$ needs to know only the subset 
				\begin{align}
					\nonumber
					\mathcal{C}_i=\left\{c_{ki} | 1 \leq k \leq n \right\} \cup \{c_{kj} | c_{kj}\cdot c_{ki} \neq 0\}
				\end{align}
				of the set of the elements of matrix $C$. Awareness of this information for the agent $i$ is consistent with Assumption II.2-2, in which it was assumed that matrix $C$ is compatible with the neighborhood graph $\mathcal{G}$.}.} Furthermore, assume that agent $i$ broadcasts its state \textcolor{black}{variable} to its neighbors if one of the following conditions
		\begin{align}
			\label{decental.trigger}
			\left\{\begin{matrix*}[l]
				(i) & 0 < {{r_i}(t) \leq r_{min_i},~~~~~~ 1 \leqslant i \leqslant n + m} \vspace{1.5mm} \\
				or \vspace{1mm}& \\
				(ii) &\left \{ \begin{matrix*}[l] \left\{\begin{matrix*}[l] {e^2_{{y_i}}(t)} > \gamma_i \left( {g_i^2\left( {\tilde y(t),\tilde \mu(t) } \right)} \right), \\ or \\ {e^2_{{y_i}}(t)} = \gamma_i \left( {g_i^2\left( {\tilde y(t),\tilde \mu(t) } \right)} \right)\, \&\, e^2_{y_i}(t) \neq 0,\end{matrix*} \right.,\\\qquad \qquad \qquad \qquad \qquad \qquad \qquad \qquad1 \leqslant i \leqslant n \vspace{1.5mm} \\ \left\{\begin{matrix*}[l] {e^2_{{\mu _{i - n}}}(t)} > \gamma_i \left( {c_{i - n}^{\top}\tilde y(t) - {d_{i - n}}} \right)^2, \\ or \\  {e^2_{{\mu _{i - n}}}(t)} = \gamma_i \left( {c_{i - n}^{\top}\tilde y(t) - {d_{i - n}}}\right)^2\, \&\, e^2_{\mu_i}(t) \neq 0, \end{matrix*}\right.,\\\qquad \qquad \qquad \qquad \qquad \qquad n + 1 \leqslant i \leqslant n + m \end{matrix*}\right.
			\end{matrix*}\right.
		\end{align}
		
		\color{black}
		occurs, where 
		\begin{align}
			\label{r_min_definition}
			{r_i}(t) \coloneqq \max \left\{ {\left. {t_{k_j}^j} \right|t_{k_j}^j < t \text{\space}\& \left( {i,j} \right) \in {\mathcal{E}_{\mathcal{G}^a}}} \right\} - \max \left\{ {\left. {t_{k_i}^i} \right|t_{k_i}^i < t} \right\},
		\end{align}
		\textcolor{black}{$e_{y_i}(t) \coloneqq y_i(t)-\tilde y_i(t),\,
			e_{\mu_{i-n}}(t) \coloneqq \mu_{i-n}(t)-\tilde \mu_{i-n}(t)$, and $r_{\min_i }$ and $\gamma_i$ are positive constants satisfying} 
		\begin{align}
			\label{gamma_definition}
			{{r_{\min_i }^2} < \gamma_i^2 } < min\left\{\dfrac{1}{12} , \dfrac{\underset{\raise0.3em\hbox{$\smash{\scriptscriptstyle-}$}}{M}^2}{2(5+3\bar{M})(\bar{M}+2)}\right\}; \quad 1\leq i \leq n+m.
		\end{align}
		In this situation, dynamic model \textcolor{black}{~\eqref{decentral.dynamic}} \textcolor[rgb]{0,0,0}{with the} broadcasting rule \eqref{decental.trigger} \textcolor[rgb]{0,0,0}{(distributed over the communication graph $\mathcal{G}^a$)} asymptotically converges to the solution of the optimization problem \eqref{lem.opt} without exhibiting Zeno behavior.
	\end{Thm}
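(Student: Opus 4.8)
The plan is to reuse, almost verbatim, the Lyapunov analysis behind Theorem~\ref{thm.centralized}, with the synchronous snapshot $(\hat y,\hat\mu)$ replaced by the componentwise‑asynchronous one $(\tilde y,\tilde\mu)$ of \eqref{tilde.def}, and then to handle Zeno‑freeness as a separate matter via the dwell‑type part (i) of the triggering rule \eqref{decental.trigger}. First I would check that $(y^\star,\mu^\star)$ — the equilibrium of \eqref{lem.dynamical} furnished by Lemma~\ref{lem.cortes} — is an equilibrium of \eqref{decentral.dynamic}: at $y=y^\star$, $\mu=\mu^\star$ all broadcast errors vanish, so $\tilde y=y^\star$, $\tilde\mu=\mu^\star$, and then $Cy^\star=d$ together with $\nabla f(y^\star)+C^\top\mu^\star=0$ give $g_i(\tilde y,\tilde\mu)=0$ for $1\le i\le n$ and $\sum_j c_{(i-n)j}\tilde y_j-d_{i-n}=0$ for $n+1\le i\le n+m$, exactly as in the proof of Theorem~\ref{thm.centralized}.

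Next I would take the same Lyapunov candidate $v=v_1+v_2+v_3+v_4$ as in \eqref{lyapunov} and differentiate it along \eqref{decentral.dynamic}. Since $\dot v$ depends only on the instantaneous values $\tilde y(t),\tilde\mu(t)$ and on the errors $e_y(t)=y(t)-\tilde y(t)$, $e_\mu(t)=\mu(t)-\tilde\mu(t)$ — not on which agent last broadcast — the expressions for $\dot v_1,\dots,\dot v_4$ and the ensuing Young‑inequality estimate are formally identical to those in Theorem~\ref{thm.centralized}, yielding, for every $\kappa>0$,
\begin{align}
\nonumber
\dot v(t)\le{}& \frac{2+\bar M}{\kappa}\|e_y(t)\|_2^2+\frac{3}{2\kappa}\|e_\mu(t)\|_2^2\\
\nonumber
&-\Bigl(\underline M-\frac{5+3\bar M}{2}\kappa\Bigr)\|g(\tilde y(t),\tilde\mu(t))\|_2^2-(1-2\kappa)\|C\tilde y(t)-d\|_2^2.
\end{align}
This is where the decentralized rule enters: between its own events, agent $i$ keeps $e_{y_i}^2(t)\le\gamma_i\,g_i^2(\tilde y,\tilde\mu)$ (and likewise $e_{\mu_{i-n}}^2(t)\le\gamma_i\,(c_{i-n}^\top\tilde y-d_{i-n})^2$ for a virtual agent), because part (ii) of \eqref{decental.trigger} fires at the latest when equality is reached; summing these componentwise bounds gives $\|e_y\|_2^2\le\bar\gamma\,\|g(\tilde y,\tilde\mu)\|_2^2$ and $\|e_\mu\|_2^2\le\bar\gamma\,\|C\tilde y-d\|_2^2$ with $\bar\gamma=\max_i\gamma_i$. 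Substituting these and then fixing $\kappa$ at a suitable value depending only on $\underline M,\bar M$ (e.g. near the minimiser of $\tfrac{(2+\bar M)\bar\gamma}{\kappa}+\tfrac{5+3\bar M}{2}\kappa$), the two upper bounds on $\gamma_i$ in \eqref{gamma_definition} are exactly what make the coefficients of $\|g(\tilde y,\tilde\mu)\|_2^2$ and of $\|C\tilde y-d\|_2^2$ strictly negative, so that $\dot v(t)\le-c_1\|g(\tilde y,\tilde\mu)\|_2^2-c_2\|C\tilde y-d\|_2^2\le 0$ with $c_1,c_2>0$ (the isolated triggering instants being harmless). Since $v$ is continuous, radially unbounded in $(y,\mu)$ and nonincreasing, the trajectory stays in a compact set; and $\dot v=0$ forces $g(\tilde y,\tilde\mu)=0$ and $C\tilde y=d$, which makes the errors vanish, hence $\tilde y=y$, $\tilde\mu=\mu$, whence $g(y,\mu)=0$, $Cy=d$ are the KKT conditions, having — by strict convexity of $f$ and $C$ full row rank — the unique solution $(y^\star,\mu^\star)$. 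A standard Lyapunov argument then yields global asymptotic stability of $(y^\star,\mu^\star)$, and in particular convergence of $y(t)$ to the solution of \eqref{lem.opt}.

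For Zeno‑freeness I would argue by contradiction: suppose broadcast times accumulate at a finite $t^\star$. On $[0,t^\star]$ the state, hence $g(\tilde y,\tilde\mu)$ and $C\tilde y-d$, is bounded, and along the flow $\dot e_{y_i}=g_i(\tilde y,\tilde\mu)$, $\dot e_{\mu_{i-n}}=c_{i-n}^\top\tilde y-d_{i-n}$ are piecewise constant; moreover the piecewise‑constant signals $\tilde y(t),\tilde\mu(t)$ have left limits at $t^\star$, so the relevant $g_i$ (resp. $c_{i-n}^\top\tilde y-d_{i-n}$) of any agent broadcasting infinitely often converges as $t\uparrow t^\star$. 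If such a limit is nonzero, a type‑(ii) event of that agent requires $|e_{y_i}|$ (resp. $|e_{\mu_{i-n}}|$) to grow from $0$ up to $\sqrt{\gamma_i}$ times a nearly‑constant nonzero quantity at a rate equal to that same quantity, forcing an inter‑event time bounded below by essentially $\sqrt{\gamma_i}>0$ — contradiction; following a chain of type‑(i) rebroadcasts back through the finite graph $\mathcal{G}^a$ reduces to this case, and the relation $r_{\min_i}<\gamma_i$ from \eqref{gamma_definition} guarantees that a synchronization burst cannot re‑trigger itself before that dwell time has elapsed. If instead all the relevant limits are zero, then $g(y(t^\star),\mu(t^\star))=0$ and $Cy(t^\star)=d$, i.e. $(y(t^\star),\mu(t^\star))=(y^\star,\mu^\star)$, after which \eqref{decentral.dynamic} stays at rest and no further event occurs; either way the accumulation is impossible. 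The step I expect to be the real obstacle is precisely this last one — controlling the interaction between the two parts of \eqref{decental.trigger}, i.e. ensuring that chains of part‑(i) rebroadcasts among neighbors cannot accumulate in finite time — together with the bookkeeping needed to sum the componentwise bounds $e_{y_i}^2\le\gamma_i g_i^2$ into the centralized‑type Lyapunov estimate even though the components of $e_y$ and $e_\mu$ are reset asynchronously; the dwell parameter $r_{\min_i}$ and the constraint $r_{\min_i}<\gamma_i$ in \eqref{gamma_definition} are exactly what is present to make this work.
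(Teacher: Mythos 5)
Your convergence half coincides with the paper's own proof: the same Lyapunov candidate (the sum of $\tfrac12\|g\|_2^2$, $\tfrac12\|Cy-d\|_2^2$, $\tfrac12\|y-y^\star\|_2^2$, $\tfrac12\|\mu-\mu^\star\|_2^2$), the same bound $\dot v \le \tfrac{2+\bar M}{\kappa}\|e_y\|_2^2+\tfrac{3}{2\kappa}\|e_\mu\|_2^2-h(\tilde y,\tilde\mu)$ carried over from Theorem~\ref{thm.centralized}, and the same substitution of the componentwise trigger bounds, with the two constants in \eqref{gamma_definition} arising exactly as the extrema over $\kappa$ of $\bigl(\tfrac{2\kappa}{3}\bigr)(1-2\kappa)$ and $\bigl(\tfrac{\kappa}{\bar M+2}\bigr)\bigl(\underline{M}-\tfrac{5+3\bar M}{2}\kappa\bigr)$. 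That part is essentially the paper's argument and is fine.

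The genuine gap is in your Zeno-freeness argument. The lower bound ``inter-event time $\approx\sqrt{\gamma_i}$'' for a type-(ii) event is valid only when agent $i$ receives no neighbor data between two of its own broadcasts: then $g_i(\tilde y,\tilde\mu)$ is constant, $|e_{y_i}|$ grows linearly at rate $|g_i|$, and reaching the threshold $\sqrt{\gamma_i}\,|g_i|$ takes a time independent of $|g_i|$ (this is the paper's Claim~1). But when a neighbor broadcasts in between, $g_i(\tilde y,\tilde\mu)$ jumps while $e_{y_i}$ does \emph{not} reset, so the threshold $\gamma_i g_i^2(\tilde y,\tilde\mu)$ can drop below the already-accumulated error and condition (ii) of \eqref{decental.trigger} can fire immediately after the neighbor's broadcast; there is no per-agent dwell time in general. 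Your sentence ``following a chain of type-(i) rebroadcasts back through the finite graph reduces to this case'' is therefore not a reduction but precisely the missing proof. The paper closes this by a different, combinatorial route: (a) if broadcasts accumulate, a pigeonhole argument produces a window shorter than $\min_i r_{\min_i}$ in which some agent broadcasts twice, which by rule (i) forces all its neighbors, and by connectivity of $\mathcal{G}^a$ \emph{all} agents, to have broadcast synchronously in that window (Claim~3); (b) immediately after a fully synchronous broadcast every $r_i$ is zero, rule (i) cannot re-fire, every agent is in the ``no incoming data'' regime, and the first subsequent event needs a positive time exceeding $\min_i r_{\min_i}$ by \eqref{gamma_definition} (Claims~1--2); (c) this uniform dwell time contradicts the assumed accumulation. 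You name the right ingredients ($r_{\min_i}<\gamma_i$, the synchronization burst), but steps (a)--(c) are the actual content of the Zeno claim and must be supplied. Your terminal case (``all limits zero $\Rightarrow$ equilibrium $\Rightarrow$ no further events'') is also insufficient: Zeno behavior concerns infinitely many events \emph{before} the accumulation time, and the state being at $(y^\star,\mu^\star)$ at that time does not by itself exclude them.
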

	\textcolor{black}{The proof of this theorem can be found in Appendix. In this proof, Lyapunov functions similar to those proposed in Theorem~\ref{thm.centralized} are used to prove asymptotic convergence. Also, the proof of  the non-existence of the Zeno behavior is sketched based on proving the following claims.
		\newline
		\textit{Claim 1}: If the agent $i$ does not receive data from any of its neighbors, it will not broadcast the information of its state variable sooner than the time $t_{k_i - 1}^i + \gamma_i $, where $t_{k_i - 1}^i$ is the last broadcast time of the agent $i$.
		\newline
		\textit{Claim 2}: If all the agents synchronously broadcast their information at time $t_a$, then the next broadcast time of each agent cannot be less than $t_a + \underset{i}{\min}\, r_{min_i}$.
		\newline
		\textit{Claim 3}: If the described event-triggered mechanism exhibits the Zeno behavior, then for each $\epsilon$ satisfying 
		\begin{align}
			\label{newinequality1}
			0 < \epsilon < \dfrac{1}{n+m+1} \underset{i}{\min}\, r_{min_i},
		\end{align}
		there exists a time interval with a length smaller than $\underset{i}{\min}\, r_{min_i}$, such that not only the time between successive broadcasts of the information in this time interval is less than $\epsilon$, but also all the agents synchronously broadcast their information in one of these broadcast times.} \\
	\textcolor{black}{Considering the point that the asymptotic stability is equivalent to the exponential one in linear time invariant systems, \eqref{decentral.dynamic} exponentially converges to the solution of optimization problem~\eqref{lem.opt} in the cases that the $f_i(.)$ are quadratic cost functions.}
	\textcolor{black}{Generally speaking, the event-generating rules in event-triggered control methods can be classified into three general categories. In the first category, the error between the last communicated value of a specific signal and its actual value is compared to a fixed threshold. The event-generating rules obtained from this approach are called absolute event-triggered mechanisms \cite{zhou2019periodic}. The event-generating rules in the second category are constructed based on a comparison between the aforementioned error and a relative threshold which yields in relative event-triggered mechanisms~\cite{tabuada2007event}. In the third category, the mixed event-triggered mechanisms are obtained by the event-generating rules in which the error is compared with a threshold combined of fixed and relative terms~\cite{donkers2012output}. Using relative event-triggered mechanisms, the error can be converged to zero, whereas the absolute and mixed event-triggered mechanisms do not have such a feature. Nevertheless, the advantage of the absolute and mixed event-triggered mechanisms (in comparison to relative ones) is that if the error is close enough to zero, no more information will be exchanged, and no communication fee will be paid. The event-generating rule of Theorem \ref{thm.decentralized} generally follows a relative event-triggered mechanism. The condition $(ii)$ in relation~\eqref{decental.trigger} illustrates the relative comparison and attempts to update the  signals ($\{y_i\}_{1 \leq i \leq n},\{\mu_i\}_{n+1 \leq i \leq n+m}$) before the difference between the last sent value and the actual value of these signals becomes too large. On the other hand, one of the main challenges in introducing the event-triggered mechanisms is to guarantee the non-existence of Zeno behavior in the case of using such mechanisms. The condition $(i)$ in the event-generating rule~\eqref{decental.trigger} resolves the mentioned challenge. As seen in the proof of Theorem \ref{thm.decentralized}, this condition prevents the occurrence of the Zeno behavior by checking the broadcasting time of the agents.}	Not only Zeno-freeness is proved for the triggering mechanism~\eqref{decental.trigger}, but also the proof of Theorem~\ref{thm.decentralized} reveals that this mechanism yields in a dwell-time positivity property (On the basis of this proof, the time interval between each two consecutive broadcast times is not less than $\min\left\{r_{\min_i}\right\}_{i=1,...,n}$).
	
	The following remark reveals how the agents can deal with the global parameters which are used in~\eqref{gamma_definition}.
	\color{black}
	\begin{Rem} (Computing global parameters)
		\label{rem.global}
		According to \eqref{gamma_definition}, each agent needs to know global parameters $\bar{M}$ and $\underset{\raise0.3em\hbox{$\smash{\scriptscriptstyle-}$}}{M}$ for setting the \textcolor[rgb]{0,0,0}{local} parameters $\gamma_i$ and $r_{\min_i}$. A simple distributed algorithm can be proposed as follows \textcolor[rgb]{0,0,0}{in order} to help the agents for computing $\bar{M}$ and $\underset{\raise0.3em\hbox{$\smash{\scriptscriptstyle-}$}}{M}$. To this end, at the first step each agent should share \textcolor[rgb]{0,0,0}{the} upper and lower bounds of the second derivative of its cost function with its neighbors. Then, by considering the received information from the neighbors, each agent choose the lowest (greatest) value among the lower (upper) bounds reported by its neighbors \textcolor[rgb]{0,0,0}{to estimate} $\underset{\raise0.3em\hbox{$\smash{\scriptscriptstyle-}$}}{M}$ ($\bar{M}$). At the next step, each agent shares its estimations on the mentioned parameters with its neighbors, and then again updates its estimation via the above-mentioned rule. By continuing this approach, after $d_g$ steps, all of the agents will know the correct values of $\bar{M}$ and $\underset{\raise0.3em\hbox{$\smash{\scriptscriptstyle-}$}}{M}$, where $d_g$ is the diameter of graph $\mathcal{G}$.
	\end{Rem}
	
	\color{black}
	\subsection{\label{subsec.main} Optimization using an uncertain nonlinear multi-agent system in the presence of discontinuous communications}
	
	In this subsection, an algorithm for solving Problem~\ref{problem} is proposed. 
	For this purpose, at first, 
	\color{black} we need to introduce the method of linear active disturbance rejection control (LADRC). A special form of this method is briefly described in Lemma~\ref{lem.ladrc}.
	\begin{Lem} (Linear active disturbance rejection control) \cite{guo2016active}
		\label{lem.ladrc}
		Consider \textcolor[rgb]{0,0,0}{the} dynamic system 
		\begin{align}
			\label{lem.ad.dynamic}
			\dot x(t) = \left( {p(x) + \Delta p(x)} \right) + \omega (t) + \left( {b + \Delta b} \right)u(t),
		\end{align}
		\color{black}
		where $x(t) \in \R$ and $u(t) \in \R$. Also, $b \in \R$ ($\Delta b \in \R$) is a known (an unknown) constant, function $p(x):\R \rightarrow \R$ is known, and $\Delta p(x): \R \rightarrow \R$ and $\omega (t) : \R^{\geq 0} \rightarrow \R$ are unknown functions. 
		\color{black}
		Assume that
		\begin{align}
			\label{ADRC.condition_1}
			&\omega (t) + \Delta p(x) \text{ and } {\dfrac{d}  {dt}\left( {\omega(t)  + \Delta p(x)} \right)} \text{ are bounded,}\\
			\label{ADRC.condition_2}
			&\dfrac{\partial } {\partial x} p(x) \text{ is bounded and } p(x) \text{ is Lipschitz with parameter } L_p,\vspace{1.5mm}\\
			\label{ADRC.condition_3}
			&\textcolor{black}{|\Delta b| \leq \rho_b, \text{ where } \rho_b \text{ is known,} }
		\end{align}
		are simultaneously satisfied. By using \textcolor[rgb]{0,0,0}{the} observer 
		\begin{align}
			\label{ADRC.dynamic}
			\begin{cases}
				\dot{ \hat {x}}(t) = \hat {\bar {x}}(t) + p(\hat x) + {\dfrac{{k_1}}  {\varepsilon} }\left( {x(t) - \hat x(t)} \right) + bu(t) \\
				\dot {\hat{ \bar{ x}}}(t) = {\dfrac{{k_2}} {{\varepsilon ^2}}}\left( {x(t) - \hat x(t)} \right)
			\end{cases},
		\end{align}
		with conditions ${k_1} < 0$ and \textcolor{black}{$\left( {{L_p} + {\dfrac{\rho_b}  {b}}} \right){k_2}<\mathcal{L}$}, where
		\begin{align}
			\nonumber
			\mathcal{L}=\dfrac{2{k_1}{k_2}} { - k_1^2 - {{\left( {{k_2} - 1} \right)}^2} + \sqrt {\left( {k_1^2 + {{\left( {{k_2} - 1} \right)}^2}} \right)\left( {k_1^2 + {{\left( {{k_2} + 1} \right)}^2}} \right)} }
		\end{align}
		and $
		u(t) = {\dfrac{1}{b}}\left( {\alpha \hat x - \hat {\bar x} - p(\hat x)} \right)$ for some $\alpha < 0$,
		then there exists constant $\varepsilon_0 >0 $ such that for each $\varepsilon \in (0,\varepsilon_0)$, $t_{\varepsilon} >0 $ is found where $
		\left| {x(t)} \right| < \Gamma \varepsilon $
		for all $\varepsilon \in (0,\varepsilon_0)$ and $t > t_{\varepsilon}$. Also, $\Gamma >0$  is an $\varepsilon$-independent positive constant. 
	\end{Lem}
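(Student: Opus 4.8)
The plan is to recast \eqref{lem.ad.dynamic} in extended-state form and then analyze the interconnection of the extended state observer \eqref{ADRC.dynamic} with the closed loop by a singular-perturbation Lyapunov argument, following the philosophy of \cite{guo2016active}. First I would lump every uncertain term into a single \emph{total disturbance}
\begin{align}
\nonumber
\zeta(t) \coloneqq \Delta p(x(t)) + \omega(t) + \Delta b\,u(t),
\end{align}
so that the plant becomes $\dot x = p(x) + \zeta + bu$ and \eqref{ADRC.dynamic} is literally an observer for the pair $(x,\zeta)$. Assumptions \eqref{ADRC.condition_1}--\eqref{ADRC.condition_3}, together with the feedback $u = b^{-1}(\alpha\hat x - \hat{\bar{x}} - p(\hat x))$, are exactly what keeps $\zeta$ and $\dot\zeta$ bounded \emph{as long as} the observer and plant states stay bounded: differentiating $\zeta$ yields a bounded part from \eqref{ADRC.condition_1}--\eqref{ADRC.condition_2} plus a term $\Delta b\,\dot u$ whose only potentially large contribution, coming through $\dot{\hat{\bar{x}}} = (k_2/\varepsilon^2)(x-\hat x)$, is of order $(\rho_b/b)\,\varepsilon^{-1}$ times the observer error. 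This circular dependence forces the whole argument to be run on a maximal interval on which $(x,\hat x,\hat{\bar{x}})$ stays in a fixed compact set, and closed at the end by showing that interval is $[0,\infty)$.

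Second, I would introduce the scaled observer errors $\eta_1 \coloneqq (x-\hat x)/\varepsilon$ and $\eta_2 \coloneqq \zeta - \hat{\bar{x}}$ and check that they obey a singularly perturbed system of the form
\begin{align}
\nonumber
\varepsilon\,\dot\eta = A\,\eta + \varepsilon\,\delta(t,\eta,x),
\end{align}
where $A$ is the $2\times2$ matrix built from $(k_1,k_2)$ and $\delta$ collects the plant coupling $p(x)-p(\hat x)$ (of size $L_p\varepsilon|\eta_1|$ by \eqref{ADRC.condition_2}) together with the $\Delta b\,\dot u$ feedback, the net effect of the latter being to multiply the lower row of $A$ by the uncertain factor $1+\Delta b/b \in [1-\rho_b/b,\,1+\rho_b/b]$. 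The conditions $k_1<0$ and $(L_p+\rho_b/b)\,k_2 < \mathcal L$ are precisely the statement that every matrix of this perturbed family is Hurwitz with a \emph{common} Lyapunov matrix $P=P^\top \succ 0$; with $W(\eta) \coloneqq \eta^\top P\eta$ one then obtains, on the compact interval above, an estimate $\dot W \le -c_0\varepsilon^{-1}\|\eta\|_2^2 + c_1\|\eta\|_2^2 + c_2\|\eta\|_2$ with $\varepsilon$-independent constants, so that for $\varepsilon$ sufficiently small $\|\eta(t)\|_2$ enters and remains in a ball of radius $O(\varepsilon)$ after an initial layer of length $O(\varepsilon\log(1/\varepsilon))$.

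Third, I would close the loop on the $x$-equation: substituting the control law gives
\begin{align}
\nonumber
\dot x = \alpha x - \alpha\varepsilon\eta_1 + \bigl(p(x)-p(\hat x)\bigr) + \eta_2,
\end{align}
i.e. a scalar exponentially stable system ($\alpha<0$) driven by a term that is $O(\varepsilon)$ once the observer error is $O(\varepsilon)$. To get the uniform ultimate bound I would work with a composite Lyapunov function $V \coloneqq x^2 + \lambda\,W(\eta)$ for a small $\lambda>0$ and show that $\dot V<0$ whenever $(x,\eta)$ lies outside a ball of radius $O(\varepsilon)$; this simultaneously proves that $(x,\hat x,\hat{\bar{x}})$ cannot leave the chosen compact set in finite time (so the maximal interval is $[0,\infty)$) and yields $|x(t)|<\Gamma\varepsilon$ for all $t$ beyond some $t_\varepsilon$, with $\Gamma$ depending only on $(k_1,k_2,\alpha,L_p,\rho_b,b)$ and the bounds in \eqref{ADRC.condition_1}, not on $\varepsilon$. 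Extracting the explicit threshold $\varepsilon_0$ from the requirements ``$c_0\varepsilon^{-1}$ dominates $c_1$'' and from the cross-terms of $\dot V$ is then routine.

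The main obstacle is the self-referential nature of the disturbance: $\zeta$, and above all $\dot\zeta$, depend on $u$ and hence on the high-gain observer states, which peak over the initial $O(\varepsilon)$ layer, so none of the bounds used above is legitimate a priori. The fix is the bootstrap indicated above --- fix a compact set $K$ large relative to the initial conditions, derive every estimate with $K$-dependent but $\varepsilon$-independent constants on the maximal time on which the trajectory stays in $K$, and use the Lyapunov decrease to rule out reaching $\partial K$ --- which is how \cite{guo2016active} organizes the proof; the remainder is the standard two-time-scale bookkeeping.
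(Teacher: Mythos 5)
The paper does not prove this lemma: it is imported verbatim from \cite{guo2016active} and used as a black box in the proof of Theorem~\ref{thm.main}, so there is no in-paper argument to compare yours against. Judged on its own terms, your sketch follows the standard extended-state-observer route that the cited reference takes: lump $\Delta p(x)+\omega(t)+\Delta b\,u$ into a total disturbance $\zeta$, scale the observer errors by $\varepsilon$, obtain a singularly perturbed error system, run the Lyapunov estimates on a maximal interval over which the trajectory stays in a fixed compact set, and close the loop with a composite Lyapunov function; you also correctly isolate the genuine difficulty, namely that $\dot\zeta$ contains $\Delta b\,\dot u$ and hence feeds the $k_2/\varepsilon^2$ observer gain back into the fast dynamics, turning the uncertainty into an $O(1)$ perturbation of the lower row of the error matrix rather than an $O(\varepsilon)$ residual. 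Two caveats. First, your claim that $k_1<0$ together with $(L_p+\rho_b/b)k_2<\mathcal L$ is ``precisely'' the existence of a common Lyapunov matrix for the perturbed family is asserted, not derived; with the observer written as in \eqref{ADRC.dynamic} the nominal error matrix is $\left[\begin{smallmatrix}-k_1 & 1\\ -k_2 & 0\end{smallmatrix}\right]$, whose Hurwitz condition is $k_1>0,\ k_2>0$, so the sign conventions of the statement do not obviously match your reading, and the specific constant $\mathcal L$ (which comes from solving the Lyapunov equation for the nominal matrix and bounding the perturbation against the resulting $P$) would have to be reproduced to confirm that the stated inequality is the right smallness condition. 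Second, the lemma's conclusion bounds $|x(t)|$ itself, not merely the observer error, so your third step (the $\dot x=\alpha x+O(\varepsilon)$ closure) is essential and should be stated as part of the theorem being proved rather than as an afterthought; as written it is correct. Overall the proposal is a faithful reconstruction of the cited proof's architecture, with the quantitative verification of the gain condition left open.
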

	
	Before \textcolor{black}{introducing a solution for Problem~\ref{problem},} we need to show how the assumptions on function $f$ in Lemma~\ref{lem.cortes} are satisfied by Assumption~\ref{as.optimization}\ref{as.convex}. \textcolor[rgb]{0,0,0}{This aim is achieved by the following lemma.}
	\begin{Lem} (Relation between convexity and Lipschitz conditions)
		\label{lem.relation}
		Consider function $f\left( {{x_1},...,{x_n}} \right) = \sum\nolimits_{i = 1}^n {{f_i}({x_i})} $, where ${f_i}({x_i}):\mathbb{R} \to \mathbb{R}$ for $i=1,2,...,n$.
		\begin{enumerate}
			\item $\nabla f$ is strictly monotonic if and only if each $f_i$ is strictly convex for all $i=1,2,...,n$ .
			\item  $f$ is differentiable with locally Lipschitz partial derivatives if $f_i$ is twice differentiable with bounded second derivative for all $i=1,2,...,n$.
		\end{enumerate}
		
	\end{Lem}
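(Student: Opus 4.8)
The plan is to prove the two claims separately, since they concern logically distinct implications. For part (1), the statement is an equivalence between strict monotonicity of $\nabla f$ on $\R^n$ and strict convexity of each $f_i$ on $\R$. The key observation is that because $f$ is separable, the gradient decouples as $\nabla f(x) = \begin{bmatrix} f_1'(x_1) & \dotsc & f_n'(x_n)\end{bmatrix}^\top$, so that for any $x,y \in \R^n$ we have $\left\langle x-y, \nabla f(x) - \nabla f(y)\right\rangle = \sum_{i=1}^n (x_i - y_i)\bigl(f_i'(x_i) - f_i'(y_i)\bigr)$. First I would establish the ``if'' direction: if each $f_i$ is strictly convex, then each scalar term $(x_i-y_i)(f_i'(x_i)-f_i'(y_i))$ is nonnegative and is strictly positive whenever $x_i \neq y_i$; hence whenever $x \neq y$ at least one summand is strictly positive and all are nonnegative, giving strict positivity of the full sum. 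For the ``only if'' direction, I would argue contrapositively: if some $f_j$ fails to be strictly convex, then there exist $a \neq b$ in $\R$ with $(a-b)(f_j'(a)-f_j'(b)) \le 0$; choosing $x,y$ that differ only in the $j$-th coordinate (equal to $a$ and $b$ respectively) makes $\left\langle x-y, \nabla f(x)-\nabla f(y)\right\rangle = (a-b)(f_j'(a)-f_j'(b)) \le 0$ with $x \neq y$, contradicting strict monotonicity. Here I would use the standard characterization that a differentiable scalar function $g$ is strictly convex iff $g'$ is strictly increasing, i.e. iff $(s-t)(g'(s)-g'(t))>0$ for all $s\neq t$.

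For part (2), the claim is a one-directional implication: twice differentiability of each $f_i$ with bounded second derivative implies that $f$ is differentiable with locally Lipschitz partial derivatives. Since $f$ is a finite sum of functions each depending on a single coordinate, $f$ is differentiable and $\frac{\partial f}{\partial x_i} = f_i'(x_i)$. It then suffices to show each $f_i'$ is locally Lipschitz as a function on $\R$. Given the bound $|f_i''| \le K_i$ on all of $\R$ (or on any bounded interval, by the stated boundedness), the mean value theorem gives $|f_i'(s) - f_i'(t)| \le K_i |s-t|$ for all $s,t$, so $f_i'$ is in fact globally Lipschitz, hence certainly locally Lipschitz. I would then note that a map on $\R^n$ whose partial derivatives are each locally Lipschitz in the corresponding single variable is locally Lipschitz as a whole (the gradient is bounded on compact sets with a uniform local constant), which is precisely the hypothesis needed to invoke Lemma~\ref{lem.cortes}.

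Neither part presents a serious obstacle; the content is essentially bookkeeping around the separable structure of $f$. The one point requiring mild care is making sure the scalar convexity characterization is invoked correctly in the ``only if'' direction of part (1): strict convexity of a differentiable function is equivalent to strict monotonicity of its derivative, and the failure of strict convexity must be translated into the existence of two points witnessing non-strict monotonicity of $f_j'$ — this is where I would be most careful to phrase the contrapositive cleanly rather than appealing to second-derivative conditions, since strict convexity does not force $f_j'' > 0$ everywhere. With that caveat handled, the proof reduces to the coordinatewise computations sketched above.
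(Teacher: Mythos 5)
Your proposal is correct and follows essentially the same route as the paper's proof: both rest on the coordinatewise decomposition $\left\langle {x-y,\nabla f(x)-\nabla f(y)} \right\rangle = \sum_{i=1}^n (x_i-y_i)\left(f_i'(x_i)-f_i'(y_i)\right)$ for part (1) and the bounded-second-derivative/mean-value-theorem argument for part (2). The only difference is that the paper re-derives the equivalence between strict convexity and strict monotonicity of the gradient (via the first-order condition and an integral of $h(t)=g(x+t(y-x))$), whereas you invoke it as a standard fact.
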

	\color{black}
	This lemma is proved in Appendix. 
	\color{black} 
	Now, Theorem~\ref{thm.decentralized}, Lemma~\ref{lem.ladrc} and Lemma~\ref{lem.relation} are applied to introduce \textcolor[rgb]{0,0,0}{a solution for Problem~\ref{problem}} in the following theorem. 
	\begin{Thm} (Distributed optimization by a dynamic multi-agent system and discontinuous communications) 
		\label{thm.main}
		Consider the multi-agent system \eqref{dynamic} meeting Assumption~\ref{as.coeff}. Assume that the aim is to control of this system in such a way that the agents cooperatively solve the distributed optimization problem \eqref{optimization} satisfying Assumption~\ref{as.optimization}. Let $y_i(t)$ be obtained from\textcolor{black}{~\eqref{decentral.dynamic}, in which $\tilde y_i(t)$ and $\tilde \mu_i(t)$ are updated from \eqref{tilde.def}} by considering the triggering condition \eqref{decental.trigger}. If $\hat e_i(t)$ and $\hat{\bar e}_i (t) $ are updated by
		\begin{align}
			\begin{cases}
				\label{e_definition}
				{{{\dot {\hat e}}_i} = {{\hat {\bar e}}_i} + \left( {{p_i}({{\hat e}_i} + {y_i}) - {p_i}({y_i})} \right) + \frac{{{k_{1i}}}}{{{\varepsilon _i}}}\left( {{x_i} - {y_i} - {{\hat e}_i}} \right) + {b_i}{u_i}} \\
				{{{\dot {\hat {\bar e}}}_i} = \frac{{{k_{2i}}}}{{{\varepsilon _i}^2}}\left( {{x_i} - {y_i} - {{\hat e}_i}} \right)}
			\end{cases}; \quad 1 \leq i \leq n,
		\end{align}
		where $k_{1i} \leq 0$ and the control signals are given by
		\begin{align}
			\label{controller}
			{u_i}(t) = \frac{1}{b_i}\left( {\alpha {{\hat e}_i} - {{\hat {\bar e}}_i} - {p_i}({{\hat e}_i} + {y_i}) + {p_i}({y_i})} \right),\quad \alpha<0,
		\end{align}
		for all $1\leq i\leq n$, then $x_i(t)$ converges to the solution of the optimization problem \eqref{optimization} as ${\varepsilon _i} \to 0$. 
	\end{Thm}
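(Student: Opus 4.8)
The plan is to combine the convergence result of Theorem~\ref{thm.decentralized} with the disturbance-rejection estimate of Lemma~\ref{lem.ladrc} via a change of variables that turns each agent's state $x_i$ into a perturbed copy of the ``ideal'' optimizer-dynamics variable $y_i$. First I would introduce the error coordinate $e_i(t) \coloneqq x_i(t) - y_i(t)$. Differentiating and substituting the agent dynamics~\eqref{dynamic} and the $y_i$-dynamics from~\eqref{decentral.dynamic}, one gets
\begin{align}
\nonumber
\dot e_i = \bigl(p_i(e_i + y_i) - p_i(y_i)\bigr) + \Delta p_i(x_i) + (b_i + \Delta b_i)u_i - g_i(\tilde y,\tilde \mu) + \bigl(p_i(y_i) - p_i(y_i)\bigr),
\end{align}
so that, writing $\omega_i(t) \coloneqq \Delta p_i(x_i(t)) - g_i(\tilde y(t),\tilde\mu(t))$ and $P_i(e_i) \coloneqq p_i(e_i+y_i)-p_i(y_i)$, the error dynamics take exactly the form~\eqref{lem.ad.dynamic} of Lemma~\ref{lem.ladrc} with state $e_i$, input $u_i$, known vector field $P_i$, unknown constant $\Delta b_i$, and lumped disturbance $\omega_i$. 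The next step is to verify that the hypotheses \eqref{ADRC.condition_1}--\eqref{ADRC.condition_3} hold for this reduced system: $|\Delta b_i|\le\rho_b$ is Assumption~\ref{as.coeff}\ref{as.bounded}; $\partial P_i/\partial e_i = \partial p_i/\partial x_i$ evaluated at $e_i+y_i$ is bounded and $P_i$ inherits the Lipschitz constant $L_{p_i}$ from Assumption~\ref{as.coeff}\ref{as.lipschitz}; boundedness of $\omega_i$ and $\dot\omega_i$ follows from boundedness of $\Delta p_i$ and $\tfrac{\partial}{\partial x_i}\Delta p_i$ (Assumption~\ref{as.coeff}\ref{as.bounded}), boundedness of $g_i$ and its derivative along trajectories (since $\nabla f_i$ has bounded continuous second derivative by Assumption~\ref{as.optimization}\ref{as.convex}, the entries of $C$ are constants, and $g_i$ is piecewise constant in $\tilde y,\tilde\mu$ so $\dot\omega_i$ picks up only jumps — here one must be slightly careful and argue boundedness on the inter-event intervals together with the dwell-time positivity established in Theorem~\ref{thm.decentralized}).

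With the hypotheses in place, Lemma~\ref{lem.ladrc} applies to each $i$: the observer~\eqref{e_definition} is precisely~\eqref{ADRC.dynamic} written in the $e_i$ coordinate (note $\hat e_i$ estimates $e_i = x_i - y_i$, which is why the innovation term reads $x_i - y_i - \hat e_i$), the gain conditions $k_{1i}<0$ and $(L_{p_i}+\rho_b/b_i)k_{2i}<\mathcal L$ match, and the controller~\eqref{controller} is the LADRC feedback $u_i = \tfrac1{b_i}(\alpha\hat e_i - \hat{\bar e}_i - P_i(\hat e_i))$. Hence there is $\varepsilon_{0i}>0$ so that for every $\varepsilon_i\in(0,\varepsilon_{0i})$ one has $|e_i(t)| = |x_i(t)-y_i(t)| < \Gamma_i\varepsilon_i$ for all $t$ beyond some $t_{\varepsilon_i}$, with $\Gamma_i$ independent of $\varepsilon_i$.

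It remains to pass to the limit. By Theorem~\ref{thm.decentralized} (using Lemma~\ref{lem.relation} to confirm that Assumption~\ref{as.optimization}\ref{as.convex} implies $\nabla f$ is strictly monotonic and $f$ has locally Lipschitz partial derivatives, so the hypotheses of that theorem and of Lemma~\ref{lem.cortes} are met), the trajectory $y(t)$ generated by~\eqref{decentral.dynamic} with the triggering rule~\eqref{decental.trigger} converges to the optimizer $y^\star$ of~\eqref{lem.opt}, which coincides with the optimizer $x^\star$ of~\eqref{optimization}. Combining with the ISS-type bound above, $|x_i(t) - x_i^\star| \le |x_i(t)-y_i(t)| + |y_i(t) - y_i^\star| \le \Gamma_i\varepsilon_i + o(1)$, so $\limsup_{t\to\infty}|x_i(t)-x_i^\star| \le \Gamma_i\varepsilon_i$, and letting $\varepsilon_i\to 0$ gives $x_i(t)\to x_i^\star$, i.e. $x(t)$ converges to the solution of~\eqref{optimization}.

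The main obstacle I anticipate is the verification that the disturbance $\omega_i$ and its derivative are genuinely bounded \emph{uniformly} in time: this requires knowing a priori that $y(t)$ (hence $g_i(\tilde y,\tilde\mu)$) stays bounded, which one gets from the Lyapunov function $v^d$ of Theorem~\ref{thm.decentralized} being non-increasing, and it requires controlling the jump behaviour of the piecewise-constant $\tilde y,\tilde\mu$ across event times — this is exactly where the Zeno-freeness / dwell-time lower bound $\min_i r_{\min_i}$ from Theorem~\ref{thm.decentralized} is essential, since without a positive dwell time $\dot\omega_i$ could fail to be bounded. A secondary technical point is the interaction between the two time scales (the $\varepsilon_i$-fast observer and the $O(1)$ optimizer dynamics): one should either invoke Lemma~\ref{lem.ladrc} with $\omega_i$ treated as an exogenous signal on each agent and then close the loop, or appeal to a singular-perturbation argument; the cleaner route is the former, since Lemma~\ref{lem.ladrc} already allows a time-varying $\omega_i$.
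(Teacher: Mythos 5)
Your proposal follows essentially the same route as the paper's proof: define $e_i = x_i - y_i$, recast the error dynamics in the form of Lemma~\ref{lem.ladrc} with $\bar p_i(e_i)=p_i(e_i+y_i)-p_i(y_i)$ as the known part and the remainder as the lumped disturbance, verify conditions \eqref{ADRC.condition_1}--\eqref{ADRC.condition_3} from Assumption~\ref{as.coeff}, and combine the resulting bound $|e_i(t)|<\Gamma_i\varepsilon_i$ with the convergence of $y(t)$ to $y^\star$ supplied by Theorem~\ref{thm.decentralized}. One bookkeeping slip: in your displayed error dynamics the add-and-subtract of $p_i(y_i)$ should leave a residual $+\,p_i(y_i)$ rather than the vanishing $p_i(y_i)-p_i(y_i)$, and this term must be absorbed into the lumped disturbance (the paper takes $\omega(t)=p_i(y_i)-g_i(\tilde y,\tilde\mu)$ and keeps $\Delta p_i(e_i+y_i)$ as the unknown-function term); it is bounded by exactly the reasoning you already give for $\omega_i$.
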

	\begin{proof} For \textcolor[rgb]{0,0,0}{the} agent $i$ of system \eqref{dynamic}, define a\textcolor{black}{n} augmented agent with the state vector \textcolor{black}{$\begin{bmatrix} x_i(t) & y_i(t) & h_i(t) \end{bmatrix}{^\top}$}, \textcolor[rgb]{0,0,0}{where} ${h_i} = \begin{bmatrix}\mu_{i_1}&\mu_{i_2}&...&\mu_{i_k}\end{bmatrix}_{i_j \in K(i)}$ \textcolor[rgb]{0,0,0}{with} $K(i)=\left\{k | c_{ki} \neq 0\right\} $. Assume that the state \textcolor[rgb]{0,0,0}{variables} of each augmented agent update according to\textcolor{black}{~\eqref{decentral.dynamic}} and \eqref{decental.trigger}. Also, assume that each agent can share the information of its augmented agent with its neighbors when it communicates with them. According to the Theorem~\ref{thm.decentralized}, it is concluded that $\mathop {\lim }\limits_{t \to \infty } {y_i}(t) = y_i^{\star}$, where ${y^{\star}} = \begin{bmatrix} y_1^{\star}&...&y_n^{\star} \end{bmatrix}^{\top}$ is the solution of optimization problem \eqref{optimization}. By defining $e_i=x_i-y_i$, \eqref{dynamic} and\textcolor{black}{~\eqref{decentral.dynamic}} result in
		\begin{align}
			\label{e_dynamic}
			{\dot e_i} = \left( {{p_i}\left( {{x_i}} \right) + \Delta {p_i}\left( {{x_i}} \right)} \right) + \left( {{b_i} + \Delta {b_i}} \right){u_i} - {g_i}\textcolor{black}{(\tilde y,\tilde \mu )},
		\end{align}
		\textcolor{black}{which is written as}
		\begin{align}
			\label{e_dynamic_1}
			{{\dot e}_i} = \left( {{{\bar p}_i}\left( {{e_i}} \right) + \Delta {p_i}\left( {{e_i} + {y_i}} \right)} \right) + \left( {{p_i}({y_i}) - {g_i}\left( {\tilde y,\tilde \mu } \right)} \right) 
			+ \left( {{b_i} + \Delta {b_i}} \right){u_i}(t),
		\end{align}
		where $
		{\bar p_i}\left( {{e_i}(t)} \right) = {p_i}\left( {{e_i}(t) + {y_i}(t)} \right) - {p_i}({y_i}(t))$. By comparing \eqref{e_dynamic_1} and \eqref{lem.ad.dynamic}, from Lemma~\ref{lem.ladrc} the following fact is deduced. If conditions \eqref{ADRC.condition_1}-\eqref{ADRC.condition_3} \textcolor[rgb]{0,0,0}{are simultaneously satisfied}, by applying the control signal \eqref{controller} there is constant $\varepsilon_{i_0}$ such that for \textcolor[rgb]{0,0,0}{each $\varepsilon_i \in (0,\varepsilon_{i_0})$, $t_{\varepsilon_i}$ yielding in $\left| {{e_i}(t)} \right| < {\Gamma _i}{\varepsilon _i}$ for all $\varepsilon_i \in (0,\varepsilon_{i_0})$ and $t>t_{\varepsilon_i}$ is found. }In this case, $\Gamma_i$ is an $\varepsilon_i$- independent positive constant. This statement is equivalent to that $x_i(t)$ converges to the solution of optimization problem \eqref{optimization} as $\varepsilon_i \rightarrow 0$. In order to complete the proof, at the final step it is shown that conditions \eqref{ADRC.condition_1} and \eqref{ADRC.condition_2} are satisfied. To satisfy \eqref{ADRC.condition_1}, it is necessary to examine the boundedness of two functions $\Delta {p_i}\left( {{e_i} + {y_i}} \right) + \left( {{p_i}({y_i}) - {g_i}\textcolor{black}{\left( {\tilde y,\tilde \mu } \right)}} \right)$ and $\frac{{d\left( {\Delta {p_i}\left( {{e_i} + {y_i}} \right) + \left( {{p_i}({y_i}) - {g_i}\textcolor{black}{\left( {\tilde y,\tilde \mu } \right)}} \right)} \right)}}{{dt}}$. This proposition is deduced by considering Assumption~\ref{as.coeff}\ref{as.bounded} and noting that $\dot y_i(t)$, $y_i(t)$, $\mu_i(t)$, $\tilde{y}_i(t)$, and $\tilde \mu_i(t)$ are bounded. On the other hand, according to Assumption~\ref{as.coeff}\ref{as.lipschitz}, $p_i$ is a Lipschitz function with parameter $L_{p_i}$. Considering the definition of function $\bar{p}_i$, it can be shown that function $\bar{p}_i$ is also a Lipschitz function with parameter $L_{p_i}$. Furthermore, according to Assumption~\ref{as.coeff}\ref{as.bounded}, ${{\partial {p_i}({x_i})} \mathord{\left/
				{\vphantom {{\partial {p_i}({x_i})} {\partial {x_i}}}} \right.
				\kern-\nulldelimiterspace} {\partial {x_i}}}$ is bounded. Hence, ${{\partial {{\bar p}_i}({x_i})} \mathord{\left/
				{\vphantom {{\partial {{\bar p}_i}({x_i})} {\partial {x_i}}}} \right.
				\kern-\nulldelimiterspace} {\partial {x_i}}}$ is bounded, and consequently condition \eqref{ADRC.condition_2} is satisfied.
	\end{proof}
	\textcolor{black}{
		The result of Theorem~\ref{thm.main} on solving Problem~\ref{problem} is summarized in Algorithm~\ref{alg.main}.
		\begin{algorithm}
			\color{black}
			\caption{\color{black}Solving distributed optimization \eqref{optimization} by the uncertain multi-agent system~\eqref{dynamic} with discontinuous communications}
			\label{alg.main}
			\begin{itemize}
				\item\textbf{Initialization}: Set $y_i(0)=x_i(0)$ and $\hat{e}_i(0)=\hat{\bar{e}}_i(0)=0$ for $i=1,...,n$. Also, set $\mu_{i-n}(0) = 0$ for all $i=n+1,...,n+m$. Moreover, for the agent $i$ ($i=1,..,n$), form the augmented state vector $\begin{bmatrix} x_i(t) & y_i(t) & h_i(t) \end{bmatrix}\textcolor{black}{^\top}$, where ${h_i(t)} = \begin{bmatrix}\mu_{i_1}(t)&\mu_{i_2}(t)&...&\mu_{i_k}(t)\end{bmatrix}_{i_j \in K(i)}$ and $K(i)=\left\{k | c_{ki} \neq 0\right\} $. Furthermore, \textcolor{black}{set ${t}{^i_1}=0$} for all $i=1,...,n+m$.
				\item \textbf{Do} 
				\begin{itemize}
					\item Define $\tilde{y}_i(t) = y_i({t}{^i_{k_i}})$ for $i=1,...,n$ ($\tilde{\mu}_{i-n}(t) = \mu_{i-n}({t}^i_{k_i})$ for $i=n+1,...,n+m$), where ${t}^i_{k_i}$ is the last update time of $y_i$ ($\mu_{i-n})$.
					\item Update $y_i(t)$ and $\mu_{l}(t)$ from\textcolor{black}{~\eqref{decentral.dynamic}} by the agent $i$ ($i=1,...,n$) for $l \in K(i)$ (For updating $\mu_{l}(t)$ where $l \in K(i)$, agent $i$ only needs the information received in the last broadcast time).
					\item Update $\hat{e}_i(t)$ and $\hat{\bar{e}}_i(t)$ from~\eqref{e_definition}, and the controller signal $u_i(t)$ from~\eqref{controller} by the agent $i$ ($i=1,...,n$).
					\item Apply the control signal to~\eqref{dynamic}.
				\end{itemize}
				\item \textbf{\textbf{While} condition \eqref{decental.trigger} is not satisfied.}
				\item \textbf{Set} ${t}^i_{k_i+1}=t$ and $k_i \leftarrow k_i+1$, when condition \eqref{decental.trigger} is met by the agent $i$ ($i=1,...,n+m$). \textbf{If} the agent index $i$ satisfies $1 \leq i \leq n$, then agent $i$ should broadcast $y_i(t)$ to its neighbors.  \item \textbf{else} each agent with non-zero corresponding element in the $(i-n)$'s row of $C$ should broadcast $\mu_{i-n}(t)$ to its neighbors which are not involved in $(i-n)$'s equality constraint of optimization problem \eqref{optimization}.
				\item\textbf{Go to} step 2.
			\end{itemize}
		\end{algorithm}
	}
	\color{black}
	\section{\label{sec.application}Application in Resource Allocation \textcolor{black}{and Examples}}
	
	In this section, two resource allocation based sample applications for the \textcolor[rgb]{0,0,0}{results obtained in} the previous section are discussed. Also, related numerical examples are given to show the efficiency of the \textcolor[rgb]{0,0,0}{introduced algorithm}.
	\subsection{\label{sec.allocation1}Case I}
	
	Resource allocation problem is widely raised in different fields such as communication/sensors networks, economical systems, and power grids \cite{zhao2014design}. \textcolor{black}{On the basis of the works \cite{richert2016distributed} and \cite{yi2016initialization}, {solving} some resource allocation problems can be considered as a sample application for the {obtained} results.} According to the obstacles of using centralized methods in resource allocation problems (e.g., low efficiency of centralized methods in complex networks, high communication cost, privacy concerns, and time-delay challenges \cite{yi2016initialization}), proposing fully distributed algorithms seems to be more useful in practice. In recent years, this issue has been widely considered in literature and addressed from different aspects \cite{ghadimi2013multi}. In the modeling of a resource allocation problem \cite{yi2016initialization}, $k$ storing tasks should be done by $l$ agents. Each store-place has a specific capacity and the agents should cooperatively fill the capacity of all store-places. On the other hand, each agent has a specific amount of resources and benefits from any task proportional to the amount of resources sent to that task. The objective in this network system is to maximize the sum of the benefits of the all agents. In this resource allocation system, each agent knows only its local data. This data includes the benefit of sending resources for each task, amount of its resources, and the load of each task. Furthermore, in this system, \textcolor[rgb]{0,0,0}{the} agents share their data with some of the other agents through a \textcolor[rgb]{0,0,0}{communication} graph. This problem can be represented as a distributed optimization problem with $n$ agents ($n\leq k\cdot l$) and $m=k+l$ linear equality constraints as 
	\begin{align}
		\nonumber
		\left\{\begin{matrix*}[l]
			\min\limits_{x \in \R^n} & f(x) = \sum\limits_{i = 1}^l {\sum\limits_{j \in {K_i}}^{} {{f_{i,j}}({x_{i,j}})} } \vspace{1.5mm}  \\
			~{\rm s.t.}  & \left\{\begin{matrix*}[l]\sum\limits_{j \in {K_i}} {{x_{i,j}}}  = {P_i},&i = 1,2,...,l\\\sum\limits_{i|j \in {K_i}} {x_{i,j}} = {T_j},&j = 1,...,k \end{matrix*}\right.
		\end{matrix*}\,\right.,
	\end{align}
	where ${K_i} \subseteq \left\{ {1,2,...,k} \right\}$ is the union of the tasks that the agent $i$ can do, $P_i$ is \textcolor[rgb]{0,0,0}{the nominal power of the agent} $i$, and $T_j$ is task $j$'s storage capacity. \textcolor{black}{Furthermore, $x_{i,j}$ is the amount of {{the works}} of task $j$ that are done by the agent $i$. Also, function $f_{i,j}(.)$ denotes the cost of doing task $j$ by the agent $i$.} Also, it is assumed that the activity of each agent is specified based on the differential equation 
	\begin{align}
		\label{ex.1.dynamic}
		{\dot x_{i,j}} = \left( {{a_{i,j}} + \Delta {a_{i,j}}} \right){x_{i,j}} + ({b_{i,j}} + \Delta {b_{i,j}}){u_{i,j}},
	\end{align}
	for all $1 \leqslant i \leqslant l$ and $j \in {K_i}$, where the real values $a_{i,j}$ and $b_{i,j}$ are known, and $\Delta a_{i,j}$ and $\Delta b_{i,j}$ are unknown bounded constants. The Algorithm~\ref{alg.main} can be used in above-described resource allocation problem. As a sample, consider the following \textcolor{black}{numerical} example.
	
	\begin{Ex} In the considered resource allocation problem, assume the case that $l=2$, $k=2$, and $n=4$. Also, suppose that the corresponding optimization problem is in the form
		\begin{align}
			\nonumber
			\left\{\begin{matrix*}[l]
				\min\limits_{x \in \R^4} & f(x) =  - 5x_{1,1}^2 - 15x_{1,2}^2 - 20x_{2,1}^2 - 10x_{2,2}^2 \vspace{0.5mm}  \\
				~{\rm s.t.}  & \left\{\begin{matrix*}[l]{{x_{1,1}} + {x_{1,2}} = 1},&{{x_{2,1}} + {x_{2,2}} = 1}\\{{x_{1,1}} + {x_{2,1}} = 1},&{{x_{1,2}} + {x_{2,2}} = 1} \end{matrix*}\right.
			\end{matrix*}\,\right..
		\end{align}
		Furthermore, assume that equation $\dot{x}=(A+\Delta A)x+(B+\Delta B)u$ with $A=\begin{bmatrix} -2&-3\\-4&-5 \end{bmatrix}$ and $b=\begin{bmatrix} 2&3\\4&5 \end{bmatrix}$ describes \textcolor[rgb]{0,0,0}{the dynamic model of} the agents. \textcolor{black}{The communication graph for this problem is shown in Figure~\ref{ex1.1}. 
		} 
		\color{black}
		\begin{figure}[hbt!]
			\centering
			\includegraphics[width=55mm]{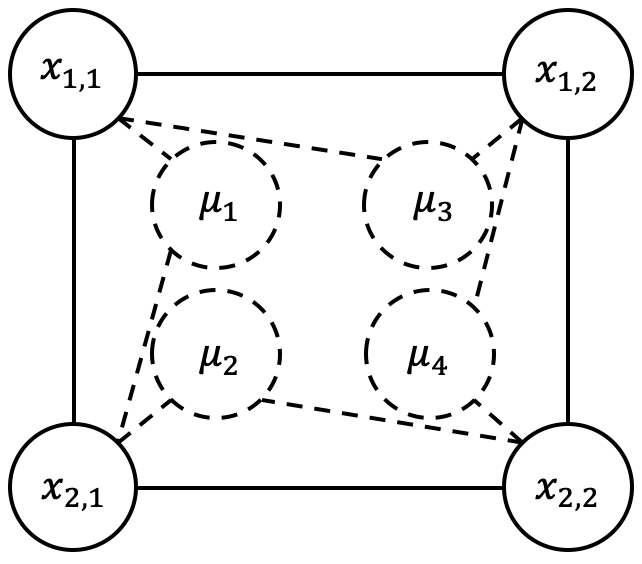}
			\caption{\color{black}{Communication graph of Example 1 (The virtual agents and the their related edges are shown in the dashed form).}}
			\label{ex1.1}
		\end{figure}
		\color{black}
		\\
		\textcolor{black}{Considering two sets of random values for uncertainties $\Delta A$ and $\Delta B$ and applying {Algorithm~\ref{alg.main}} with $\gamma_i > 0$ for all $1 \leq i \leq 4$, simulation results are shown in Figure \ref{ex1}, and compared with the case of using continuous-time approach (i.e., by considering $\gamma_i=0$ for all $1 \leq i \leq 4$). These results confirm that the system state vector converge to the optimum value $\left( {x_{1,1}^{\star},x_{1,2}^{\star},x_{2,1}^{\star},x_{2,2}^{\star}} \right) = \left( {0.7,0.3,0.3,0.7} \right)$ in the both cases. As a remarkable point, the performance of the control system in tracking the optimum solution of the considered optimization problem does not significantly change in the case of using discontinuous communication in comparison with the case of using continuous-time method. Moreover, Figure~\ref{fig.trig} specifies the triggering times of each agent in the case of discontinuous communication. This figure reveals that the number of data broadcasting of each agent decreases as we approach the equilibrium point.
		} 
		\begin{figure}[hbt!]
			\centering
			\includegraphics[width=135mm]{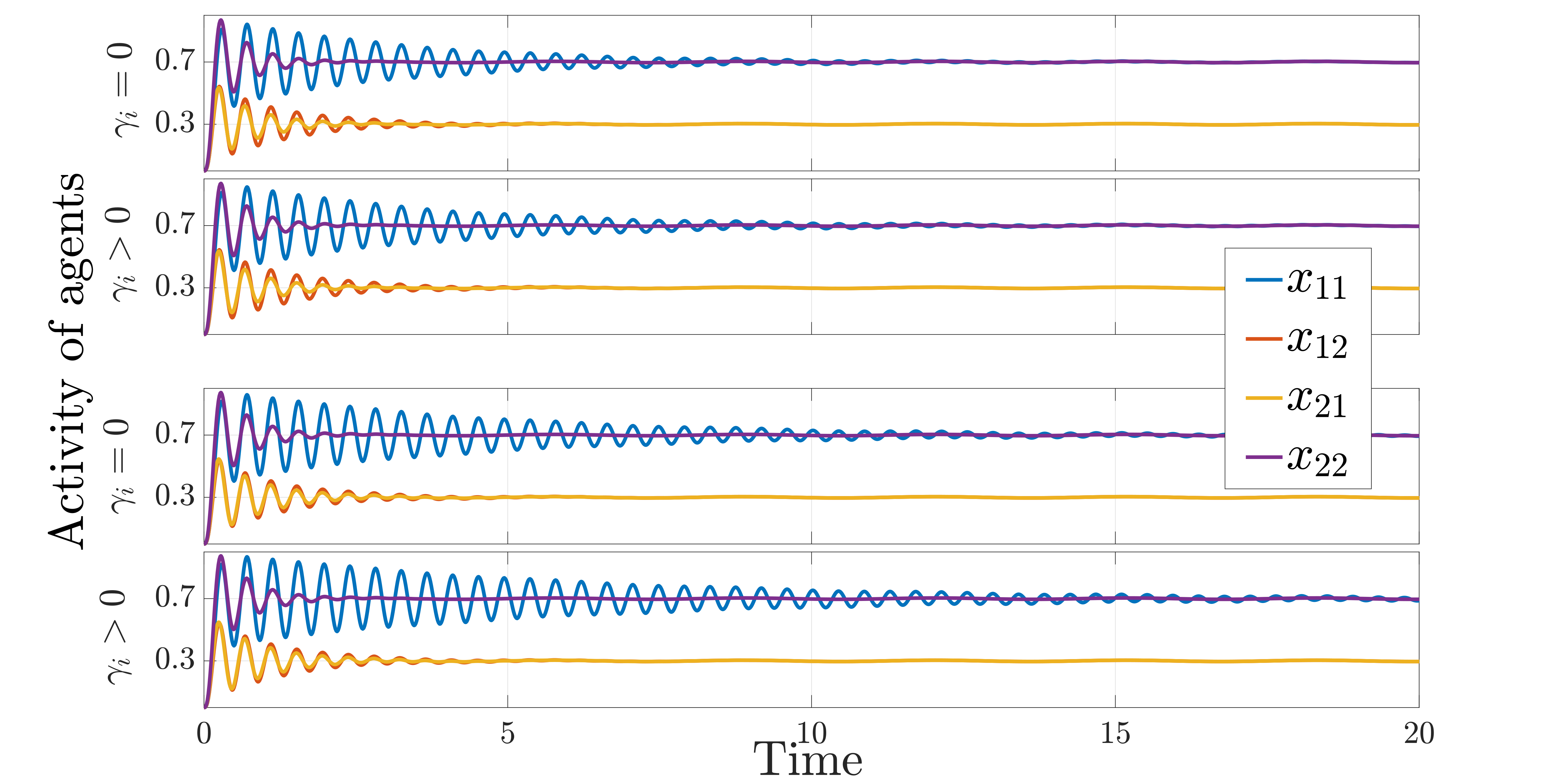}
			\caption{\color{black}{Simulation results of Example 1 for two different sets of random uncertainties in dynamic \textcolor[rgb]{0,0,0}{model} \eqref{ex.1.dynamic}}.}
			\label{ex1}
		\end{figure}
		\begin{figure}[hbt!]
			\centering
			\includegraphics[width=135mm]{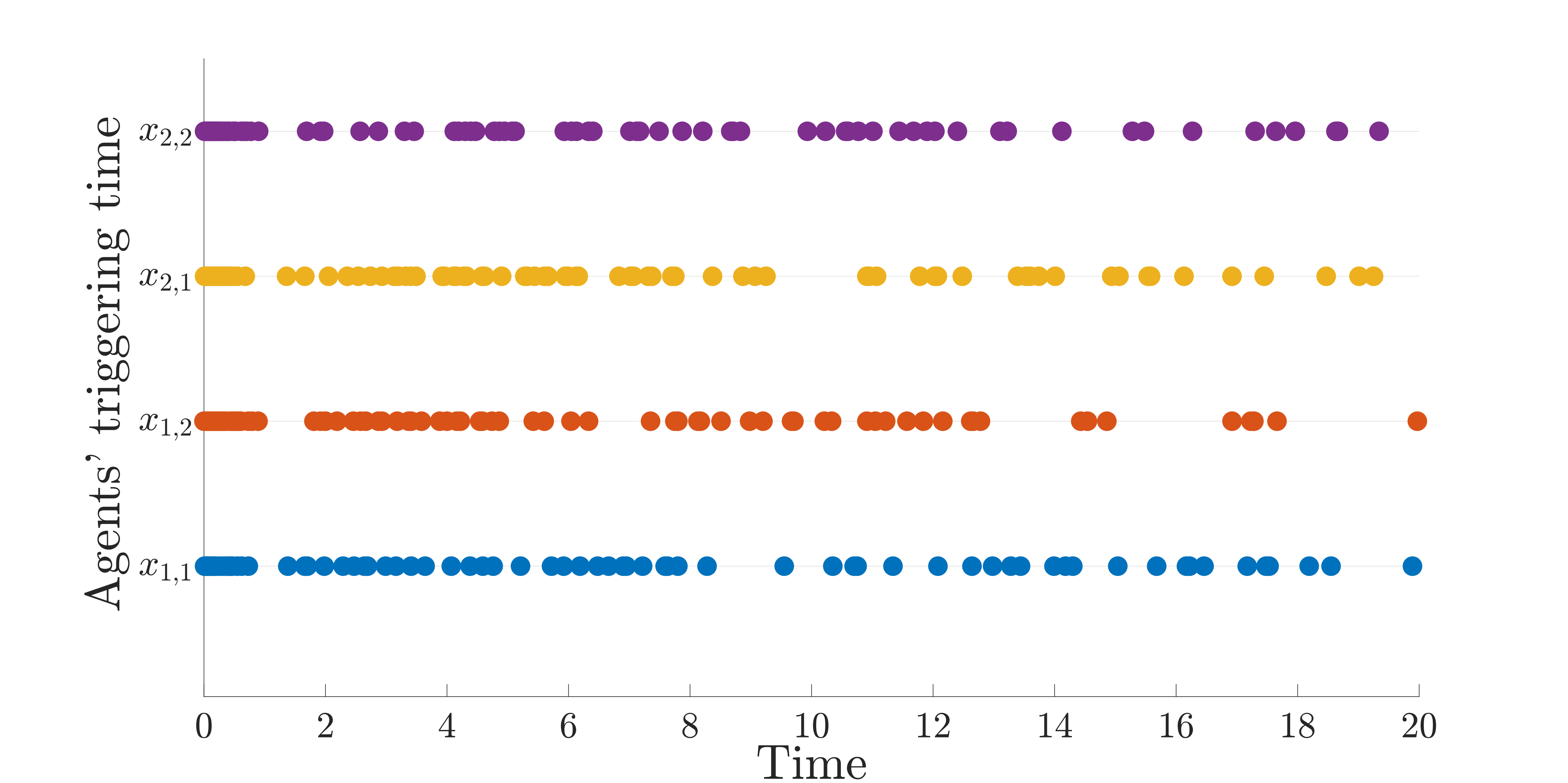}
			\caption{\textcolor{black}{Triggering times of the agents in Example 1.}}
			\label{fig.trig}
		\end{figure}
	\end{Ex}
	\subsection{\label{sec.allocation2}Case II}
	\textcolor[rgb]{0,0,0}{A sample form of the} resource allocation problems is distributed economic dispatch problem (DEDP) in power grids. In DEDP, it is assumed that there exist $n$ controlling areas. Each controlling area, namely controlling area $i$ ($i=1,...,n$), has a local generator supplying power\textcolor[rgb]{0,0,0}{, denoted by} $P_{G_i} \in \R$, and has a local load consuming power\textcolor[rgb]{0,0,0}{, denoted by} $P_{d_i} \in \R$. Also, each local generation has a generation cost (defined by function ${f_i}\left( {{P_{{G_i}}}} \right):\mathbb{R} \to \mathbb{R}$ \textcolor[rgb]{0,0,0}{in controlling area $i$}) and also a transfer cost (defined by function ${g_i}\left( {{P_{{G_i}}} - {P_{{d_i}}}} \right):\mathbb{R} \to \mathbb{R}$ \textcolor[rgb]{0,0,0}{for controlling area $i$}). Generation and cost functions of each controlling area are unknown for the other agents. The total power generating in the network should satisfy constraint $
	\sum\nolimits_{i = 1}^N {{P_{{G_i}}} = \sum\nolimits_{i = 1}^N {{P_{{d_i}}}} }$
	in the steady state. Also, the objective is to optimize the global cost function
	\begin{align}
		\nonumber
		f\left( {{{\left[ {{P_{{G_i}}}} \right]}_{1,2,...,n}}} \right) = \sum\limits_{i = 1}^N {\left( {{f_i}\left( {{P_{{G_i}}}} \right) + {g_i}\left( {{P_{{G_i}}} - {P_{{d_i}}}} \right)} \right)}.
	\end{align}
	It is assumed that the power generation dynamic model for each agent is described by
	\begin{align}
		\label{ex2.dynamic}
		{\dot P_{{G_i}}} = {r_i} \cdot {P_{{G_i}}} + {s_i} \cdot {u_i},\quad 1 \leq i \leq n,
	\end{align}
	where $r_i, s_i \in \R$ for all $1 \leq i \leq n$. In Example 2, which is chosen from~\cite[Example 5.2]{yi2016initialization}, it is shown that \textcolor[rgb]{0,0,0}{Algorithm~\ref{alg.main}} can be used for solving the above-described DEDP.
	\begin{Ex} Consider the DEDP in a 118-bus system with 59 generators. Assume that \textcolor[rgb]{0,0,0}{the} local cost functions of each area \textcolor[rgb]{0,0,0}{\textcolor[rgb]{0,0,0}{are}} in the form
		\begin{align}
			\nonumber
			\begin{cases}
				{{f_i}({P_{{G_i}}}) = {a_i}P_{{G_i}}^2 + {b_i}{P_{{G_i}}} + {c_i}} \\
				{{g_i}({P_{{G_i}}} - {P_{{d_i}}}) = {{a'}_i}{{\left( {{P_{{G_i}}} - {P_{{d_i}}}} \right)}^2} + {{b'}_i}\left( {{P_{{G_i}}} - {P_{{d_i}}}} \right) + {{c'}_i}}
			\end{cases}; \quad 1 \leq i \leq 59,
		\end{align}
		\textcolor{black}{where their uncertain coefficients satisfy ${{a_i},{{a'}_i} \in \left[ {0.0024,0.0679} \right]} \,(\$W^{-2})$, $ {{b_i},{{b'}_i} \in \left[ {8.3391,37.6968} \right]} \,(\$W^{-1}$) and ${{c_i},{{c'}_i} \in \left[ {6.78,74.33}\right]}\, (\$) $.
			Also, assume that ${p_{{d_i}}} \in \left[ {0,300} \right] (W)$.} An undirected ring graph with additional edges (1,4), (15,25), (25,35), (35,45) and (45,50) is considered for sharing and exchanging information between the agents (The above-mentioned specifications for the DEDP have been specified in~\cite[Example 5.2]{yi2016initialization}). In addition, it is assumed that the power of generators is generated according to \eqref{ex2.dynamic} \textcolor[rgb]{0,0,0}{with} the uncertain parameters ${{r_i} \in \left[ {5,10} \right]}$ and ${{s_i} \in \left[ {7,8} \right]}$.
		
		Sample simulation results of using \textcolor[rgb]{0,0,0}{Algorithm~\ref{alg.main}} in solving this DEDP are shown in Figures \ref{ex2.fig1} and \ref{ex2.fig2}. Approaching the power of generators to optimal values of the DEDP, where $\varepsilon_i=0.005$ for $i=1,2,...,n$ in \eqref{e_definition}, is shown in Figure \ref{ex2.fig1}. \begin{figure}[hbt!]
			\centering
			\includegraphics[width=135mm]{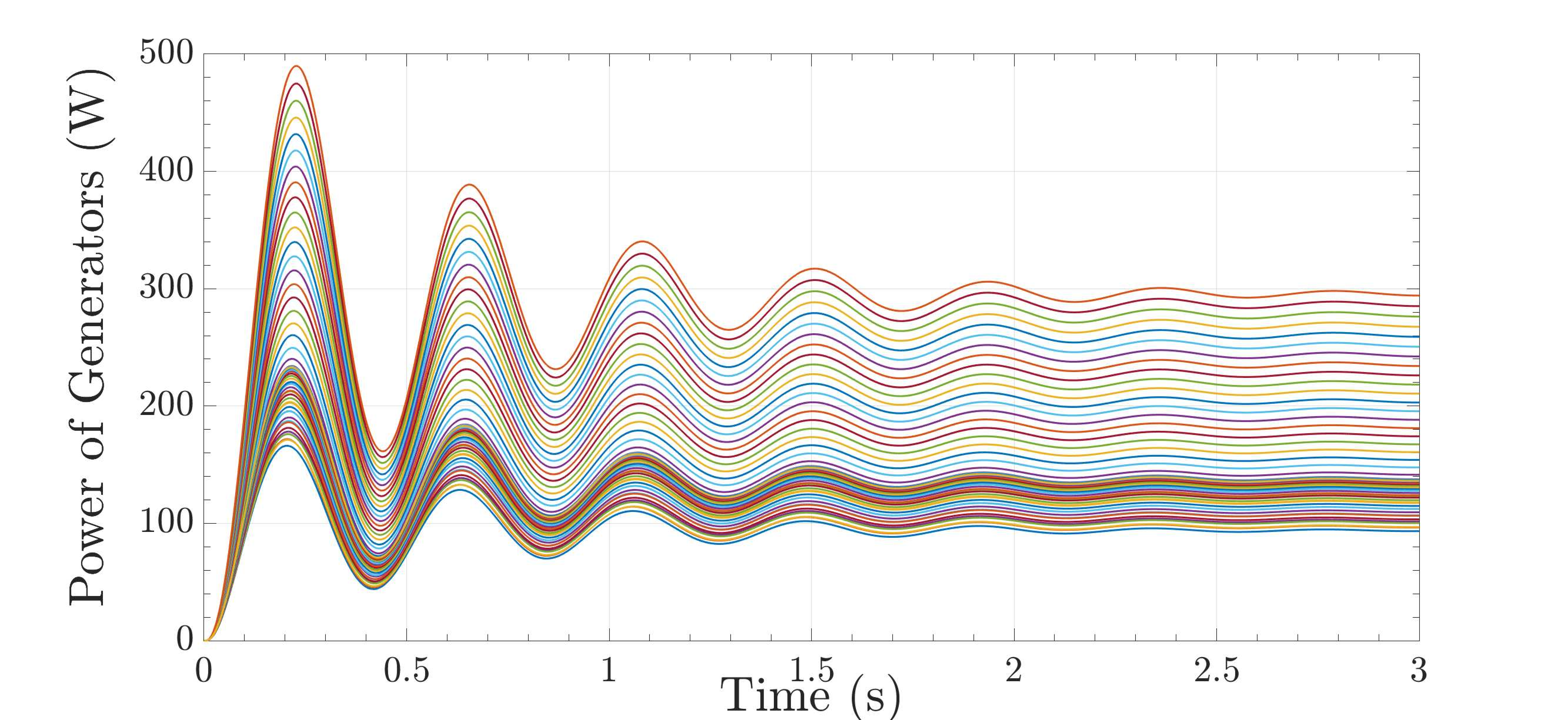}
			\caption{Power of generators in Example 2.}
			\label{ex2.fig1}
		\end{figure}
		\textcolor{black}{Also, Figure~\ref{ex2.fig2} addresses the influence of the free parameters $\varepsilon_i$ of Algorithm~\ref{alg.main} on the distance between the final value of the power of generators and the optimization solution. This figure reveals that a greater value for $\varepsilon_i$ yields in a greater final distance between the optimization solution and the generators' powers.}
		\begin{figure}[hbt!]
			\centering
			\includegraphics[width=135mm]{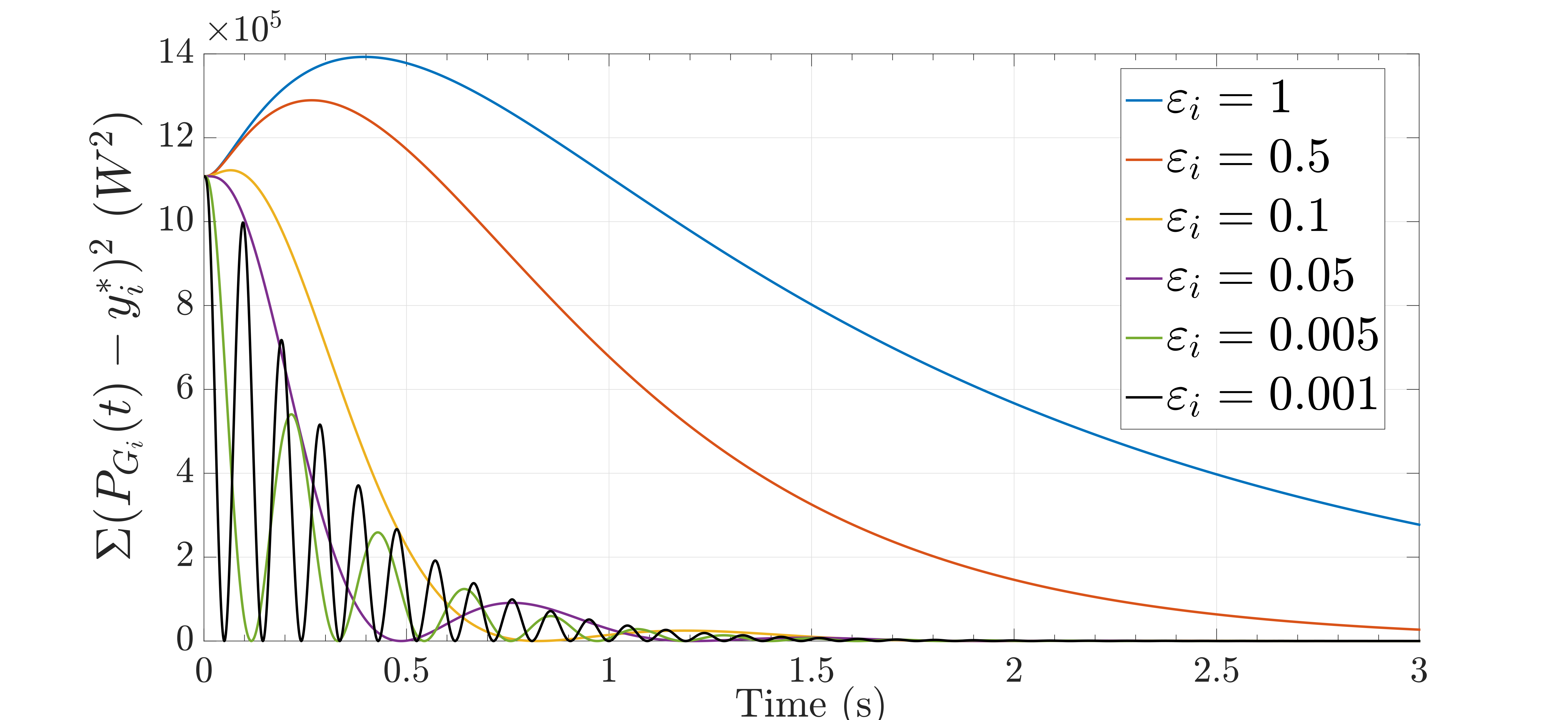}
			\caption{Effect of $\varepsilon_i$ on \textcolor[rgb]{0,0,0}{convergence} speed in Example 2 ($\sum {{{\left( {{P_{{G_i}}}(t) - y_i^{\star}} \right)}^2}} $ versus time where $y_i^{\star}$ denotes the optimal value for $P_{G_i}$). In each simulation of this figure, an equal value is assumed for all $\varepsilon_i$ ($i=1,2,...,n$).}
			\label{ex2.fig2}
		\end{figure}
		To verify the applicability of \textcolor[rgb]{0,0,0}{Algorithm~\ref{alg.main}} in the case of \textcolor[rgb]{0,0,0}{variation} in \textcolor[rgb]{0,0,0}{the} network specifications, in Figure \ref{ex2.fig3} numerical simulation results of using this algorithm with $\varepsilon_i=0.005$ ($1 \leq i \leq n$) are shown in the presence of the following changes: 
		\begin{enumerate}
			\item $ \pm 20\%$ variation in the loads of 18 (randomly chosen) areas at time 2.
			\item $ 0\%-50\%$ change in the value of $a_i$ for 18 (randomly chosen) generators, and $-50\%-0\%$ change in the value of $b_i$ for another 18 (randomly chosen) generators at time 3.
			\item Disconnecting two (randomly selected) buses from the network at time 4.
		\end{enumerate}
		\textcolor{black}{Simulation results in Figure~\ref{ex2.fig3} confirm that Algorithm~\ref{alg.main} is robust against the aforementioned changes, and as the network specifications change, the power of generators converges to the new optimal values.}
		\begin{figure}[hbt!]
			\centering
			\includegraphics[width=135mm]{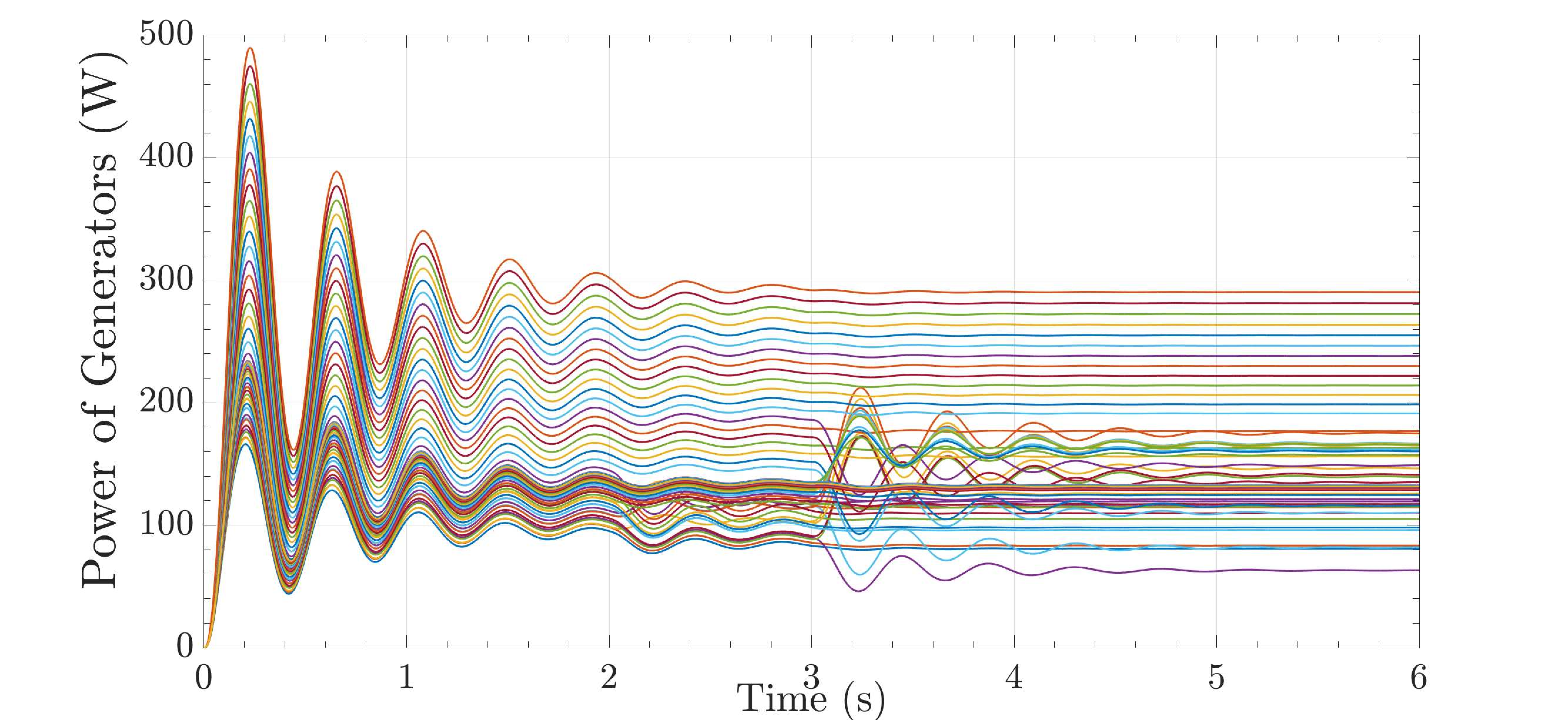}
			\caption{Power of generators in Example 2 in the presence of \textcolor[rgb]{0,0,0}{variations} (1)-(3) in the network specifications.}
			\label{ex2.fig3}
		\end{figure}
	\end{Ex}
	\section{\label{sec:conclusion}Conclusion}
	
	This paper investigated how through event-triggered communications a network of dynamic agents can cooperatively solve a constrained optimization problem. To this end, at first on the basis of Lemma~\ref{lem.cortes} developed in \cite{cortes2018distributed}, Theorem~\ref{thm.centralized} was presented to introduce a centralized event-triggered {mechanism} for solving the considered constrained optimization problem. The result of Theorem~\ref{thm.centralized} was modified to propose a distributed version of the {mechanism} in Theorem~\ref{thm.decentralized}. Furthermore, Theorem~\ref{thm.main} presented an event-triggered based control method for an uncertain dynamic multi-agent system in order to cooperatively solve a constrained optimization problem. \textcolor[rgb]{0,0,0}{Based on this theorem, Algorithm~\ref{alg.main} was proposed to solve Problem~\ref{problem}.} It was verified that the obtained results can be useful in resource allocation problem. 
	
	There are some lines which invite further research works in continuation of this paper. \textcolor{black}{For example, it would be interesting to consider a more general form for the under-study optimization problem with both equality and inequality constraints. Moreover, generalizing the proposed method to solve optimization problems with multi-agent systems which communicate with each other in a directed graph can be considered as another topic for future research.}

	\bibliographystyle{ieeetr}
	\bibliography{ref}
	
	\color{black}
	\appendix \section{Proof of Theorem~\ref{thm.decentralized} and Lemma~\ref{lem.relation}}
	\label{app.proof}
	\textcolor{black}{In this appendix, the proofs of Theorem~\ref{thm.decentralized} and Lemma~\ref{lem.relation} are presented.}
	\begin{proof}[Proof of Theorem~\ref{thm.decentralized}]
		\textcolor{black} {It is worth noting that\textcolor{black}{~\eqref{decentral.dynamic}} can be considered as a distributed relation over the communication graph $\mathcal{G}^a$. This claim is justified by noticing that each (real or virtual) agent for updating its data through\textcolor{black}{~\eqref{decentral.dynamic}} only requires the information shared from itself and its neighbors at the last broadcast time and the information of its involving constraints (Assumption~\ref{as.optimization}(ii)). By considering the local parameters $r_{min_i}$ and definitions of functions $g_i$, it is revealed that the triggering mechanism~\eqref{decental.trigger} is also distributed over graph $\mathcal{G}^a$. To prove the asymptotic convergence of system\textcolor{black}{~\eqref{decentral.dynamic}} with the broadcasting rule \eqref{decental.trigger} to solution of \eqref{lem.opt}, \textcolor{black}{consider the Lyapunov function $v^d(y,\mu ) = \sum\nolimits_{i = 1}^4 {{v_i^d}\left( {y,\mu } \right)} $, where
				\begin{align}
					\nonumber
					\left\{\begin{matrix*}[l]
						{{v_1^d}(y,\mu ) = \frac{1}{2}\sum\limits_{i=1}^n\left\| {g_i\left( {y,\mu } \right)} \right\|_2^2},& {v_2^d}(y,\mu ) = \frac{1}{2}\|Cy-d\|_2^2, \vspace{1mm}\\
						{{v_3^d}(y,\mu ) = \frac{1}{2}\left\| {y - {y^{\star}}} \right\|_2^2},&{v_4^d}(y,\mu ) = \frac{1}{2}\left\| {\mu  - {\mu ^{\star}}} \right\|_2^2.
					\end{matrix*}\right.
				\end{align}
				Note that $v^d(y,\mu )=0$ if and only if $(y,\mu)=(y^{\star},\mu^\star)$. Also, $v^d(y,\mu )>0$, where  $(y,\mu) \neq (y^{\star},\mu^\star)$. Consequently, the convergence proof will be completed if it is verified that the time-derivative of the Lyapunov function candidate $v^d(y,\mu)$ with respect to the dynamic model~\eqref{decentral.dynamic} and the triggering mechanism \eqref{decental.trigger} is negative-definite. By doing some calculations similar to those done in the proof of Theorem~\ref{thm.centralized}, it can be shown that the time-derivative of $v^d(y,\mu )$ satisfies 
				\begin{align}
					\label{revision2equation}
					\dot v^d(t) \leq \frac{{2 + \bar M}}{\kappa }\left\| {{e_y(t)}} \right\|_2^2 + \frac{3}{{2\kappa }}\left\| {{e_\mu(t) }} \right\|_2^2 - h\left( {\tilde y(t),\tilde \mu (t) }\right),
				\end{align}
				where $e_y(t)\coloneqq\begin{bmatrix}e_{y_1}(t)&...&e_{y_n}(t)\end{bmatrix}^\top$, $e_{\mu}(t)\coloneqq\begin{bmatrix}e_{\mu_1}(t)&...&e_{\mu_m}(t)\end{bmatrix}^\top$, and $e_{y_i}$ and $e_{\mu_j}$ were defined in the statement of Theorem~\ref{thm.decentralized}. Moreover, the function $h:\R^{n+m} \rightarrow \R$ is in the form~\eqref{eq.h_definition}. On the other hand}
			\begin{subequations}
				\label{kappa.derivation}
				\begin{align}
					\dfrac{\underset{\raise0.3em\hbox{$\smash{\scriptscriptstyle-}$}}{M}^2}{2(5+3\bar{M})(\bar{M}+2)} &\leq \left(\dfrac{\kappa}{\bar{M}+2}\right)\left(\underset{\raise0.3em\hbox{$\smash{\scriptscriptstyle-}$}}{M}-\dfrac{5+3\bar{M}}{2}\kappa\right)\\
					\dfrac{1}{12}&\leq \left(\dfrac{2\kappa}{3}\right)(1-2\kappa)
				\end{align}
			\end{subequations}
			hold for all $\kappa$ satisfying \eqref{def.kappa} (The inequalities \eqref{kappa.derivation} can be easily proved by finding the values of $\kappa$ which minimize the functions appeared in the right hand sides of \eqref{kappa.derivation} through differentiating these functions with respect to $\kappa$). If $\gamma_i$ satisfies \eqref{gamma_definition}, from \eqref{kappa.derivation} it is found that}
		
		\color{black} 
		\begin{align}
			\label{gamma_definition_2}
			{ \gamma_i^2 } < \min\left\{\left(\dfrac{2\kappa}{3}\right)(1-2\kappa),\left(\dfrac{\kappa}{\bar{M}+2}\right)\left(\underset{\raise0.3em\hbox{$\smash{\scriptscriptstyle-}$}}{M}-\dfrac{5+3\bar{M}}{2}\kappa\right) \right\}
		\end{align}
		for all $1 \leq i \leq n+m$ and all $\kappa$ satisfying~\eqref{def.kappa}. Furthermore, if condition \eqref{decental.trigger} is not met, then from \eqref{gamma_definition_2} \color{black} it is found that
		
		\begin{multline}
			\label{eq.ineq}
			\dfrac{{2 + \bar M}}{\kappa }\left\| {{e_y(t)}} \right\|_2^2 + \dfrac{3}{{2\kappa }}\left\| {{e_\mu(t) }} \right\|_2^2 = \dfrac{{2 + \bar M}}{\kappa }\sum\limits_{i=1}^n e_{y_i}^2(t) + \dfrac{3}{{2\kappa }}\sum\limits_{i=n+1}^{n+m} e_{\mu_{i-n}}^2(t)\vspace{1.5mm}\\ 
			\leq \dfrac{{2 + \bar M}}{\kappa } \sum\limits_{i = 1}^n \gamma_i^2 g_i^2(\tilde y,\tilde \mu)+ \dfrac{3}{{2\kappa }}\sum\limits_{i=n+1}^{n+m} \gamma_i^2 (c_{i-n}^\top\tilde y(t)-d_{i-n})^2
		\end{multline}
		\color{black}
		is valid for all $t \geq 0$. \textcolor{black}{
			According to the inequality~\eqref{gamma_definition_2} and the definition of function $h$ in \eqref{eq.h_definition}, it is concluded that 
			\begin{align}
				\label{eq.ineq2}
				\dfrac{{2 + \bar M}}{\kappa } \sum\limits_{i = 1}^n \gamma_i^2 g_i^2(\tilde y,\tilde \mu)+ \dfrac{3}{{2\kappa }}\sum\limits_{i=n+1}^{n+m} \gamma_i^2 (c_{i-n}^\top\tilde y(t)-d_{i-n})^2{\leq} h(\tilde y(t),\tilde \mu(t) ).
			\end{align}
			From \eqref{eq.ineq} and \eqref{eq.ineq2}, it is obtained that
			\begin{align}
				\label{eq.ineq20}
				\dfrac{{2 + \bar M}}{\kappa }\left\| {{e_y(t)}} \right\|_2^2 + \dfrac{3}{{2\kappa }}\left\| {{e_\mu(t) }} \right\|_2^2 \leq h(\tilde y(t),\tilde \mu(t) ),
			\end{align}
			Considering the strictness of the inequality~\eqref{gamma_definition_2}, the "$\leq$" symbol in \eqref{eq.ineq2} is replaced by an equality one if and only if $g_i(\tilde y , \tilde \mu)=0$ and $c_{i-n}^\top\tilde y(t)=d_{i-n}$ for all $i$, or equivalently $(y,\mu)=(y^{\star},\mu^\star)$. According to this point and note that the left and right hand sides of \eqref{eq.ineq} are equal where $(y,\mu)=(y^{\star},\mu^\star)$, the "$\leq$" symbol in \eqref{eq.ineq20} is also replaced by an equality one if and only if $(y,\mu)=(y^{\star},\mu^\star)$. Noticing this fact and \eqref{revision2equation}, it is found that the time-derivative of the Lyapunov function $v^d(y,\mu)$ is less than or equal to zero with respect to dynamic model~\eqref{decentral.dynamic} and the broadcasting rule \eqref{decental.trigger} for all $t \geq 0$. Also, it is equal to zero if and only if $(y,\mu)=(y^{\star},\mu^\star)$. As a result, the dynamic model~\eqref{decentral.dynamic} in the presence of the triggering mechanism \eqref{decental.trigger} asymptotically converges \textcolor{black}{to $(y^\star, \mu^\star)$, as the solution of} the optimization problem \eqref{lem.opt}.}

		\textcolor{black}{In the rest of the proof, it is shown that the proposed event-triggered mechanism does not exhibit Zeno behavior. To this aim, firstly Claims 1-3, stated in Subsection \ref{subsec.event}, are proved. Then, on the basis of these claims a proof by contradiction is presented to verify the non-existence of the Zeno behavior.}
		
		\textcolor{black}{\textit{Proof of Claim 1}: According to \eqref{tilde.def}, ${g_i}(\tilde y(t),\tilde \mu (t))$ for $i=1,...,n$ and $c_{(i-n)j}\,\tilde y_j(t) - d_{i-n}$ for $i=n+1,...,n+m$ are constant in the time interval $\left[ {t_{k_i-1}^i,t_{k_i}^i} \right)$, provided that the agent $i$ does not receive any data from its neighbors in the aforementioned time interval. Thus, in such a case according to~\eqref{decentral.dynamic}, it is obtained that}
		\begin{align}
			\nonumber
			\left \{\begin{matrix*}[l]
				{y_i}(t) = \left( {t - t_{k_i - 1}^i} \right){g_i}\left( {\tilde y(t),\tilde \mu (t)} \right), \hspace{13.5mm} 1 \leq i \leq n \vspace{1mm}\\
				\mu_{i-n} (t) = \left( {t - t_{k_i - 1}^i} \right) c_{(i-n)j}\,\tilde y_j(t) - d_{i-n},\, 1 \leq i-n \leq m 
			\end{matrix*}\right. 
		\end{align}
		\textcolor{black}{where $t \in \left[ {t_{k_i-1}^i,t_{k_i}^i} \right)$. In this case, from the second part of the triggering condition \eqref{decental.trigger}, it is revealed that $t_{k_i}^i - t_{k_i - 1}^i$ should be greater than $\gamma_i$, or equivalently $t_{k_i}^i >  t_{k_i - 1}^i + \gamma_i$.}
		
		\textcolor{black}{\textit{Proof of Claim 2}: Assume that all the agents broadcast their information at time $t_a$, and the aim is to investigate the first time after $t_a$ in which an agent broadcasts its information. The first broadcast after time $t_a$ cannot be due to enabling the triggering condition~\eqref{decental.trigger}($i$) because after the time $t_a$ and before the next broadcast time $r_i(t)=0$ for $i=1,...,n+m$. On the other hand, according to \textit{Claim 1}, agent $j$ ($1\leq j \leq n+m$) will broadcast its information after the passage of time $\gamma_j$ if it does not receive data from any of its neighbors. Considering this point, the triggering condition~\eqref{decental.trigger}($ii$), and condition~\eqref{gamma_definition}, the next broadcast after time $t_a$ will not occur sooner that the time $t_a + \underset{i}{\min}\, r_{min_i}$.}
		
		\textcolor{black}{\textit{Proof of Claim 3}: In the case of exhibiting the Zeno behavior, there are infinite numbers of broadcasts in a finite time. Therefore, in this case there exists a finite interval time with a length smaller than $\underset{i}{\min}\, r_{min_i}$, such that the system has infinite numbers of broadcasts in this interval. Consider a subinterval in this time interval in which there are infinite numbers of broadcasts and the time between successive broadcasts is less than $\epsilon$. If $\epsilon$ meets condition \eqref{newinequality1}, there are at least $n +m+ 1$ broadcasts in the considered subinterval. Hence, at least one agent broadcasts its information more than once in this subinterval. Assume that the agent $i$ broadcasts the information of its state variable more than once in the considered subinterval and $t_{k_i-1}^i$ denotes its last broadcast time. According to the fact that $\epsilon < \min r_{min_i}$, it can be concluded that all the neighbors of this agent synchronously broadcast their information at the time $t_{k_i-1}^i$ due to the triggering condition~\eqref{decental.trigger}($i$). Repeating this argument to the second layer of neighbors results that they also broadcast their  information at $t_{k_i-1}^i$. Continuing this argument and considering the fact that the graph $\mathcal{G}^a$ is connected yield in that all the agents synchronously broadcast their information at time $t_{k_i-1}^i$.}

		\textcolor{black}{Now, by considering \textit{Claims 2} and \textit{3}, it is shown that the Zeno behavior is avoided. If the Zeno behavior occurs, according to \textit{Claim 3} there exists a time interval with infinite numbers of broadcasts such that the time between successive broadcasts is less than $\epsilon$ and all the agents synchronously broadcast in one of the broadcast times in this interval. Assume that all the agents broadcast at time $t_a$ in this time interval. Consequently, the next broadcast occurs before time $t_a+\epsilon$. But, it is in contradiction with Claim 2, which emphasizes that the next broadcast after the synchronous broadcast time $t_a$ will occur not sooner than the time $t_a+\underset{i}{\min}\, r_{min_i}$, because $\epsilon <(n+m+1)^{-1} \underset{i}{\min}\, r_{min_i} <\underset{i}{\min}\, r_{min_i}$. Such a contraction means that the Zeno behavior is avoided.}
	\end{proof}
	\begin{proof}[Proof of Lemma~\ref{lem.relation}]
		To proof the claim of Lemma~\ref{lem.relation}, \color{black} we firstly prove that function $g$ is strictly convex if and only if $\nabla g$ is strictly monotonic. It is a famous fact that the differentiable $g$ is strictly convex if and only if condition 
		\begin{align}
			\label{lem.convex}
			g(y) > g(x) + {\left( {\nabla g(x)} \right)^{\top}}(y - x)
		\end{align}
		holds. Condition \eqref{lem.convex} yields in
		${\left( {\nabla g(x) - \nabla g(y)} \right)^{\top}}(y - x) > 0,$
		which means strict monotony of $\nabla g$. On the other hand, if $\nabla g$ is strictly monotonic, then by defining $h(t) \triangleq g\left( {x + t\left( {y - x} \right)} \right)$ for given $x$ and $y$, it is found that
		\begin{align}
			\nonumber
			{h^\prime }(t) = {\left( {\nabla g\left( {x + t\left( {y - x} \right)} \right)} \right)^{\top}}\left( {y - x} \right).
		\end{align}
		From the fact that $\nabla g$ is strictly monotonic, it is deduced that ${h^\prime }(t) > {h^\prime }(0)$. Hence,
		\begin{align}
			\nonumber
			g(y) = h(1) = h(0) + \int\limits_0^1 {{h^\prime }} (t)dt > h(0) + {h^\prime }(0) 
			= g(x) + \nabla g{(x)^{\top}}\left( {y - x} \right) 
		\end{align}
		which yields in strict convexity of $g$. The above-mentioned result reveals that $f_i$ is strictly convex if and only if $\nabla f_i$ is strictly monotonic. Now, to prove the first statement of this lemma, it is sufficient to show that ${\nabla _{\left[ {{x_1}{\text{ }}{x_2}...{x_n}} \right]}}f = \begin{bmatrix}{\nabla _{{x_1}}}{f_1}&{\nabla _{{x_2}}}{f_2}&...&{\nabla _{{x_n}}}{f_n}\end{bmatrix}^{\top}$ is strictly monotonic if and only if $\nabla_{x_i} f_i$ is strictly monotonic for all $1 \leq i \leq n$. Since $\left\langle {\nabla f(x) - \nabla f(x),x - y} \right\rangle  = \sum\nolimits_{i = 1}^n {\left\langle {\nabla {f_i}({x_i}) - \nabla {f_i}({x_i}),{x_i} - {y_i}} \right\rangle }$, from Definition~\ref{def.monotone} it is found that $\nabla f$ is strictly monotonic if and only if $\nabla f_i$ is strictly monotonic for all $1 \leq i \leq n$. 
		
		To prove the second statement of the lemma, assume that the second derivative of $f_i$ is bounded for all $1 \leq i \leq n$ and $\nabla^2 f_i\leq L_i$. This yields in ${\left\| {\nabla {f_i}(x) - \nabla {f_i}(y)} \right\|_i} \leqslant {L_i}\left\| {x - y} \right\|$ which means that \textcolor[rgb]{0,0,0}{the} partial derivatives of $f$ \textcolor[rgb]{0,0,0}{are} Lipschitz.
	\end{proof}

\end{document}